\pdfoutput=1
\documentclass[a4paper,12pt]{article}

\usepackage{graphicx}
\usepackage{amsmath,amssymb}
\usepackage{fancybox,ascmac}
\usepackage{amsthm}
\usepackage{mathrsfs}
\usepackage{braket}
\usepackage{mathtools}
\usepackage{tikz}
\usetikzlibrary{intersections, calc, arrows.meta}
\usepackage{physics}
\usepackage{hyperref}

\makeatletter
\@addtoreset{equation}{section}
\makeatother

\makeatletter 


\@addtoreset{figure}{section}
\makeatother 

\makeatletter 


\@addtoreset{table}{section}
\makeatother 

\newtheorem{thm}{Theorem}[section]
\newtheorem{dfn}{Definition}[section]
\newtheorem{prop}{Proposition}[section]
\newtheorem{cor}{Corollary}[section]
\newtheorem*{conjecture}{Conjecture}

\title{Relation between spectra of Narain CFTs and properties of associated
boolean functions}
\author{Yuma Furuta\thanks{Reserch Institute for Mathematical Sciences, Kyoto Univercity, 
Kyoto, Japan}
\thanks{Email: yfuruta@kurims.kyoto-u.ac.jp}}
\date{\today}

\begin{document}
\maketitle

\begin{abstract}
    Recently, the construction of Narain CFT from a certain class of quantum error 
    correcting codes has been discovered. 
    In particular, the spectral gap of Narain CFT corresponds to the binary distance
    of the code, not the genuine Hamming distance.
    In this paper, we show that the binary distance is identical to the so-called EPC distance 
    of the boolean function uniquely associated with the quantum code. Therefore, seeking 
    Narain CFT with large spectral gap can be addressed by getting a boolean function 
    with high EPC distance. Furthermore, this problem can be undertaken by finding lower
    Peak-to-Average Power ratio (PAR) with respect to the binary truth table of the boolean function.
    Though this is neither sufficient nor necessary condition for high EPC distance,
    we construct some examples of relatively high EPC distances referring to the constructions 
    for lower PAR. 
    We also see that codes with high distance are related to induced graphs 
    with low independence numbers.

\end{abstract}

\newpage
\tableofcontents

\newpage
\section{Introduction}
There is a very important connection between error correcting codes and 
conformal field theories (CFT for short). 
First, Dolan et al. \cite{dolan1996} discovered the construction of 
a chiral CFT using a classical error correcting code.
Then their construction was extended to a quantum error correcting code and 
Narain CFT by Dymarsky and Shapere
 \cite{dymarsky2020a}. In these cases,
one can relate some quantities of CFTs to 
those of error correcting codes. For instance, in both classical and quantum cases,
self-duality of a code corresponds to 
modular invariance of CFT and the length of code is equal to the central charge of CFT.
In the classical case, the Hamming distance of a code corresponds to the minimum 
value of conformal weight of vertex primary operators, which is known as the spectral 
gap $\Delta$. 

At the level of codes, the Hamming distance plays a very important role in the code theory
because it represents how many errors the code can correct.
For this reason, many researchers have studied the upper and lower bound of the Hamming distance $d$ 
with the ratio of the dimension to the length fixed. 
One of the well-known lower bounds is Gilvert-Varshamov bound $d_{\mathrm{GV}}$ \cite{macwilliams1972}
that guarantees the existence of codes with distance higher than $d_{\mathrm{GV}}$.
Unfortunately, this is a nonconstructive bound, and no systematic way has been found to 
construct codes over $\mathbb{F}_2$ with distance $d\geq d_{\mathrm{GV}}$.
However there are some constructions to obtain asymptotically good codes with $n\to\infty$,
$\frac{d}{n}$ finite where $n$ is the length of the code.
For instance, Muller discovered Reed-Muller code \cite{muller1954} whose length
is $n=2^m$ and distance is $d=2^{m-2}$.
Note that quantum codes that can construct Narain CFT are restricted to real and 
self-dual codes. Since the well-known examples including Reed-Muller code 
are not always real and self-dual, they might not be useful for obtaining 
Narain CFTs with the large spectral gap.

On the other hand, Hartman et. al. \cite{hartman2014}
investigated the condition on $\Delta$ for a Virasoro CFT to have a holographic dual,
and it was extended to Narain CFT cases 
\footnote{In this picture, the holographic dual to the Narain theory is 
a $U(1)$ gravity together with some additional matter field, while 
in the Virasoro case the holographic dual is a pure $U(1)$ gravity.}
by Dymarsky and Shapere \cite{dymarsky2020}.
It requires $\Delta$ to be sufficiently large compared to $c$.
This is analogous to the condition that codes are asymptotically good for large $n$.
So, if $\Delta$ and the Hamming distance of a quantum code are in a sense related to each other,
one can construct various Narain CFTs with large $\Delta$ using constructions 
of good codes.
However, this is not the case because they are slightly different to each other.
Instead, 
one can relate the spectral gap $\Delta$ of a Narain CFT to $d_b$ of corresponding
quantum code. However since $d_b$ is unfamiliar in code theory, no such construction
for good codes with respect to $d_b$ has been known.
Therefore we are motivated to find another representation of $d_b$. 

We find a solution in a boolean function associated with a B-form code, defined in sec \ref{sec3.2}.
Previously, it has been known that the distance of a quantum code is identical to 
APC distance of the associated boolean function. We extend this relation to
$d_b$ and EPC distance of the boolean function. In addition, we find a kind of 
``duality'' between EPC distance and discrete Fourier transformations $\{I,H\}^n$.
The APC distance version of this duality was implied in \cite{danielsen2005}
where it was insisted that codes with high distance tend to be associated with 
boolean functions with low $\mathrm{PAR}$. 
To verify EPC version of this tendency, 
we construct some examples of B-form codes with length $t^2$ and distance 
$2t-2$ for some primes $t$. 

This paper is organized as follows. In section \ref{sec2} we briefly review definitions
of Narain CFT, quantum error correcting codes, and construction of Narain CFT 
using a kind of quantum error correcting codes. Section \ref{sec3} is devoted to interpreting
this construction using terms in graph theory and  boolean functions.
We also introduce definitions of propagation criteria which are used to define APC or 
EPC distance.
Then in section \ref{sec4} we show that the binary distance $d_b$ coincides with
EPC distance, which is our main result. This result suggests that getting higher 
EPC distance leads to obtaining code CFTs with large spectral gaps.
In this section we also study the correlation between high EPC distances and low Peak-to-Average
Power Ratio (PAR). This is done by testing that if codes with high $d_b$ correspond to 
graphs with low independence numbers, since PAR can be represented by the independence number
of the associated graph.

\section{Narain CFT and Quantum Stabilizer Code}\label{sec2}
Recently, some constructions of conformal field theory have been 
studied by \cite{dolan1996,dymarsky2020a} using lattices associated
with even self-dual codes. 
In this section, we review several results of \cite{dymarsky2020a}
where the construction of a Narain CFT associated with a quantum error correcting code
is shown. We also review how the spectral gap is identified with the binary distance of 
a quantum code in favorable cases.

\subsection{Narain CFT}\label{CFT}

Narain CFT was first considered by Narain \cite{narain1986}. It can be obtained by 
compactifying $n$ free bosons $X$ onto an $n$ dimensional torus associated with an 
$n$ dimensional lattice.
Given a lattice $\Gamma\subset\mathbb{R}^n$, one can compactify the field X
to the $n$-dimensional torus $\mathbb{R}^n /\Gamma$:
\begin{equation}
    \vec{X}\sim \vec{X}+2\pi\vec{e},\  \vec{e}\in\Gamma .
\end{equation}
In this configuration, any state of this theory can be labeled by a $2n$ component
vector $(\vec{p}_L,\vec{p}_R)$ where $\vec{p}_L$ and $\vec{p}_R$ have $n$ components respectively.
Using an antisymmetric metric field $\mathcal{B}$,
they are formulated as 
\begin{equation}
    \begin{split}
        \vec{p}_L&=\frac{2\vec{P}+(\mathcal{B}+I)\vec{e}}{2}\\
        \vec{p}_R&=\frac{2\vec{P}+(\mathcal{B}-I)\vec{e}}{2}
    \end{split}
\end{equation}
where $\vec{P}$ is the momentum conjugate to the field $\vec{X}$ and this vector lies in 
$\Gamma^{*}$ ($\Gamma^*$ means the dual lattice of $\Gamma$).
These vectors turn out to satisfy $|\vec{p}_L|^2-|\vec{p}_R|^2\in 2\mathbb{Z}$
so that the set of vectors $(\vec{p}_L,\vec{p}_R)$ for all possible $\vec{P}\in\Gamma^*$
and $\vec{e}\in\Gamma$ form an even self-dual Lorentzian lattice 
$\Lambda\subset\mathbb{R}^{n,n}$. 
By performing a linear transformation
\begin{equation}
    (\vec{p}_L,\vec{p}_R)\mapsto (\alpha,\beta)=\left(\frac{\vec{p}_L+\vec{p}_R}{\sqrt{2}},
    \frac{\vec{p}_L+\vec{p}_R}{\sqrt{2}}\right),
\end{equation}
 the generator matrix $\Upsilon$ and the metric $g$ of the lattice can be expressed as
\begin{equation}
    \Upsilon=
    \begin{pmatrix}
        2\gamma^* & B \\ 0 & \gamma 
    \end{pmatrix}
    ,\ g=
    \begin{pmatrix}
        0 & I\\ I& 0
    \end{pmatrix},
\end{equation}
where $\gamma$ is the generator matrix of $\Gamma$ and $\gamma^*$ is its dual.
Therefore, one finds that any even self-dual Lorentzian 
lattice defines a CFT 
called Narain CFT. To summarize, any Narain CFT corresponds to an even self-dual 
lattice and vise-versa.

\subsection{Error Correcting Codes}
In order to review how to construct Narain CFT from Quantum error correcting code (QECC), 
we give a brief 
introduction to QECC.
QECC is the protocol that enables one
to ``decode" the error state to the original state. 
A qubit of a state $\ket{\psi}$ is regarded as an element in $\mathbb{C}^2$ such that
$(q_0,q_1)$ represents the state $q_0\ket{\uparrow}+q_1\ket{\downarrow}$
where $\ket{\uparrow}$ and $\ket{\downarrow}$ are spin up state
and spin down state respectively. 
Thus all states constitute a Hilbert space $(\mathbb{C}^2)^{\otimes n}$ 
, which is usually denoted $\mathcal{H}$.
In this up-down basis,
one can regard the interactions with the environment as a
linear transformation on $\mathcal{H}$. So,
an error on a state $\ket{\psi}$ is described by a linear operator $\mathcal{E}$
such that the error state is $\mathcal{E}\ket{\psi}$.

Now we will introduce stabilizer code, a kind of QECC that is to be used to construct
Narain CFT.
Let $\mathcal{H}_{\mathcal{C}}$ be a linear subspace of $\mathcal{H}$
that consists of all the states which can be error-corrected.
This is called code subspace.
A stabilizer code associated with $\mathcal{H}_{\mathcal{C}}$ is defined
by a group $\mathscr{S}$
of operators stabilizing any state of $\mathcal{H}_{\mathcal{C}}$, 
called stabilizer group:
\begin{equation}
    \mathscr{S}\coloneqq \left\{ s:\mathcal{H}\to\mathcal{H}\mid s\ket{\psi}=\ket{\psi}
    \mathrm{for}\  ^{\forall}\ket{\psi}\in\mathcal{H}_{\mathcal{C}} 
    \right\}.
\end{equation} 
For simplicity, we assume that any stabilizer $s\in\mathscr{S}$ is obtained 
by taking a tensor product of Pauli matrices up to a constant factor.
In other words, each generator $s_i$ of $\mathscr{S}$ can be expressed
in terms of $n$ components binary vectors
$\alpha,\ \beta$ as
\begin{equation}
    s_i=i^{\alpha\cdot\beta}\epsilon\left(
        X^{\alpha_1}\otimes\cdots\otimes X^{\alpha_n}
    \right)\left(
        Z^{\beta_1}\otimes\cdots\otimes Z^{\beta_n}
    \right),\ \alpha,\beta\in \{0,1\} ^n
\end{equation}
where $\varepsilon$ is a constant phase factor and 
$X, Z$ are Pauli $x$ and $z$ matrices respectively. Note that each $X$ or $Z$
is a linear operator on the corresponding qubit and $\alpha_i,\ \beta_i=1$ 
means Pauli $y$ operator is multiplied on the $i$-th qubit.
Therefore, one can regard each generator $s_i$ as a $2n$ components binary vector
$(\alpha,\beta)$ and commutativity of any two operators requires 
\begin{equation}\label{commutativerelation}
    s_i s_j = s_j s_i \Leftrightarrow \alpha_i \cdot\beta_j - \alpha_j\cdot\beta_i
    \equiv 0 \ \mathrm{mod}2.
\end{equation}
This relation can be simplified by constructing a $(n-k)\times2n$ binary matrix
\begin{equation}
    H = \begin{pmatrix}
        \alpha_1 & \beta_1 \\
        \dots & \dots \\
        \alpha_{n-k} & \beta_{n-k}
    \end{pmatrix}
\end{equation}
where we assume that the number of the generators of $\mathscr{S}$ is $n-k$ for convenience.
Then the relation (\ref{commutativerelation}) becomes $HgH^{\top}=0\ \mathrm{mod}2$
where $g=\begin{pmatrix}
         0&I\\I&0
         \end{pmatrix}$.
This means that the code 
generated by the matrix $H$\footnote{In other words, the code is a subspace of 
$\mathbb{F}_2^{2n}$ which is spanned by $n-k$ rows of $H$.} is even with respect to the 
Lorentzian metric $g$.
So, one can identify a stabilizer code with a corresponding code $\mathcal{C}$ generated 
by rows of $H$.
Moreover, $\mathcal{C}$ can be regarded as an additive code on
Galois field $\mathrm{GF}(4)$\footnote{This is a field extension of GF(2)$=\mathbb{F}_2$ 
by a polynomial $x^2+x+1$, in other words one obtains GF(4) by adding to GF(2) one of
the root $\alpha$ of $x^2+x+1$. So GF(4) can be denoted as $\mathbb{F}_2 (\alpha)$.}.
This can be done using Gray map
\begin{equation}\label{Graymap}
    \begin{split}
        0&\leftrightarrow(0,0),\ 1\leftrightarrow(1,1)\\
        \omega&\leftrightarrow(1,0),\ \bar{\omega}\leftrightarrow(0,1),
    \end{split}
\end{equation}
which is an isomorphism under addition between GF(4) and $(\mathbb{Z}_2)^2$ 
where $\omega$ is one of the cube roots of $1$. Then one gets a codeword $c$ of 
the code on GF(4) by combining the $i$-th components of $\alpha$ and $\beta$ into 
an element of GF(4) via Gray map (\ref{Graymap}), so that $\mathcal{C}$ can be 
regarded as a code on GF(4).

\subsection{Construction of Narain CFT}\label{difinitiondb}
Now we are going to review the result of \cite{dymarsky2020a,
dymarsky2020} where the construction of a Narain CFT from a real self-dual 
quantum stabilizer code $\mathcal{C}$ is introduced. 
This can be done by constructing a Lorentzian even self-dual lattice, associated
with a Narain CFT as described in section \ref{CFT}.
Given a quantum stabilizer code, one can construct an associated
lattice $\Lambda(\mathcal{C})$, defined as 
\begin{equation}\label{lattice}
    \Lambda(\mathcal{C})\coloneqq\left\{
        v\in\mathbb{F}_2^{n}\mid v\equiv ^{\exists}c\ \mathrm{mod}2,\ 
        c\in\mathcal{C}
    \right\}/\sqrt{2}.
\end{equation}
Here the metric of $\Lambda(\mathcal{C})$ is taken to be Lorentzian
and the factor $1/\sqrt{2}$ is necessary for convenience.
Through this construction, some properties are inherited from the code:
if $\mathcal{C}$ is real, $\Lambda(\mathcal{C})$ is even,
and if $\mathcal{C}$ is self-dual, $\Lambda(\mathcal{C})$ is also self-dual.
Thus, a real self-dual quantum stabilizer code $\mathcal{C}$ corresponds to an even self-dual 
lattice with respect to the Lorentzian metric, which can construct a Narain CFT.
In this way, one can construct a Narain CFT from $\mathcal{C}$,
and such a Narain CFT is called code CFT.

Since each element $(\vec{p}_L,\vec{p}_R)$ labels a primary vertex operator 
$
    V_{\vec{p}_L,\vec{p}_R}=:e^{i\vec{p}_L\cdot X_L+i\vec{p}_R\cdot X_R}:,
$
the Euclidean norm $\| (\vec{p}_L,\vec{p}_R)\|^2/2=\frac{p_L^2+p_R^2}{2}$ represents
the conformal dimension of $V_{\vec{p}_L,\vec{p}_R}$.
Note that since the metric of $\Lambda(\mathcal{C})$ is taken to be Lorentzian
the conformal dimensions are not equal to the norm as elements of the lattice
and also not related to the weights of codewords.
However, this difficulty is resolved by defining a new weight of a codeword $c$,
named binary weight as follows.
\begin{equation}\label{binaryweight}
    w_b(c)=w_x(c)+2w_y(c)+w_z(c).
\end{equation}
The difference from the original one is the coefficient on $w_y(c)$.  
Now we define the minimal distance $d_b$ of $\mathcal{C}$ with respect to $w_b$ 
to be the minimum value of $w_b(c)$ among all codewords $c\in\mathcal{C}$. Then $d_b$
is proportional to
the minimal norm of all elements of $\Lambda(\mathcal{C})$, which is twice the smallest
conformal dimension $\Delta$ of corresponding code CFT called spectral gap. 
Strictly, for $n>4$,
$d_b$ is not proportional to $\Delta$ because $\Lambda(\mathcal{C})$
for arbitrary $\mathcal{C}$ always has a vector $\frac{1}{\sqrt{2}}(2,0,\dots ,0)$
whose norm is 2 and is smaller than the spectral gap. This can be partially solved
by applying shift operation \cite{dymarsky2020a}
to elements of $\Lambda(\mathcal{C})$ which makes
it possible to relate $\Delta$ and binary distance $d_b$.
This operation is called ``twist" of a code CFT.
Then, at least for $d_b\leq 4$ one finds the relationship between $d_b$ and corresponding $\Delta$ is
\begin{equation}\label{specralgapanddb}
    \Delta = \frac{d_b}{4}.
\end{equation}
Note that the equation \eqref{specralgapanddb} holds at least for $d_b\leq 4$
\footnote{It may holds for the case $d_b>4$, but it does not always holds.}
since twisted theories also have light states with norms of order 1.
\footnote{We thank Anatoly Dymarsky for reminding us of the fact that 
the binary distance is not always proportional to the spectral gap.}
However, searching higher $d_b$ is a good scientific question, since it is expected that
high binary distances should be related to large spectral gaps, as investigated
in \cite{yahagi2022}.
This question is based on the problem to obtain the holographic dual to the 
Narain CFT, and it imposes the condition that the CFT has a large spectral gap.
Thus, we will investigate how to construct quantum codes with high $d_b$, 
representing $d_b$ by another form. 

\section{Graphs and Boolean Functions}\label{sec3}
After constructing code CFTs from quantum stabilizer codes, we will introduce
in this section the notion of graphs and boolean functions 
associated with real self-dual codes.
They are deeply connected to T-duality of code CFTs and to the corresponding equivalence,
named T-equivalence. Then for boolean functions, one can consider some criteria about 
their periodic characteristic introduced in \cite{danielsen2006}. In section \ref{sec3.2},
we will show that the fixed-aperiodic autocorrelation is related to Hamming distance of 
the quantum code.
\subsection{B-form Codes and Graphs}
For code CFTs, some elements of the code equivalence group named Clifford group can be written 
in terms of T-duality transformations.
It includes arbitrary permutations of codeword components and swapping the $i$-th 
components of $\alpha$ and $\beta$.\footnote{In other words, it is conjugations
of $i$-th component as a GF(4) code, $\omega\leftrightarrow\bar{\omega}$.}
So these transformations are called T-equivalences of a code, and they play 
an important role in discussing B-form codes, defined later.
Note that not all Clifford group elements correspond to T-duality of a code CFT,
because it contains cyclic permutations $1\rightarrow\omega\rightarrow\bar{\omega}
\rightarrow 1$,
which are not T-equivalences.

When a generator matrix of a code is represented in terms of binary vectors $(\alpha,\beta)$
with $2n$ components, one can bring it to the simple form named B-form 
by performing T-equivalence transformations. Explicitly,
\begin{equation}
    G^{\top}=\begin{pmatrix}
        \alpha_1 && \beta_1 \\ \cdots && \cdots \\ \alpha_n && \beta_n
    \end{pmatrix}
    \xrightarrow{\mathrm{T-equivalence}}
    \begin{pmatrix}
        B && I
    \end{pmatrix}
\end{equation}
where $B$ is $n\times n$ matrix and $I$ is $n\times n$ identity matrix.
Moreover, if the code is real self-dual then the matrix $B$ defined above
has zeros on the diagonal and is symmetric, which implies that $B$ corresponds
to the adjacency matrix of a finite graph of size $n$.  
Here the adjacency matrix $B_{ij}$ of a graph of size $n$ is defined as 
$B_{ij}=1$ if and only if vertexes $V_i$ and $V_j$ are connected by an edge
and $0$ otherwise.
Thus, each code CFT can be represented by a corresponding graph, and as in the
discussion later, certain T-dualities can be interpreted as graphical 
transformations named edge local complementation.
Using the adjacency matrix, one can explicitly write down all generators of stabilizer 
group $\mathscr{S}$ as
\begin{equation}
    s_i=\sigma_x^i \prod_{j=1}^{n}(\sigma_z^{j})^{B_{ij}},
\end{equation}
and the unique state $\ket{\psi}$ that is stabilized by all elements in $\mathscr{S}$
\begin{equation}\label{graphstate}
    \ket{\psi}=2^{-n}\sum_{\alpha\in\mathbb{Z}_2^n}
    (-1)^{f(\alpha)}\ket{\alpha_1,\alpha_2,\dots,\alpha_n},\ \alpha=
    (\alpha_1,\alpha_2,\dots,\alpha_n)
\end{equation}
where $f:\mathbb{Z}_2^n\to\mathbb{Z}_2$ is a boolean function associated with the 
adjacency matrix $B_{ij}$ defined as $f(\alpha)=\sum_{i<j} B_{ij}\alpha_i\alpha_j$
for $\alpha\in\mathbb{Z}_2^n$.
In particular, the state $\ket{\psi}$ is called graph state.
One can verify the above equations by referring \cite{hein2006}.

\subsection{Periodic Criterion of Boolean Function}\label{sec3.2}
In this section, we introduce some definitions of propagation criteria of 
boolean functions which are used to construct graph states (\ref{graphstate})
associated with adjacency matrices. Then we also review a part of the results of 
\cite{danielsen2005,danielsen2006}, which we will make use of later.
They claimed some relationships between certain properties of a boolean function
and those of the corresponding quantum self-dual code.

Let $f$ be a function on $\mathbb{Z}_2^n$ valued in $\mathbb{Z}_2$ for a positive integer $n$.
This function is called a boolean function and we will introduce some definitions
of several notions which characterize its periodic or aperiodic properties.
For a given boolean function $f$, one can consider a vector $s$ with the information 
of all values it takes (sometimes this is mentioned as ``truth table").
$s$ is defined as 
\begin{equation}
    s=((-1)^{f(0,0,\dots,0)},(-1)^{f(0,0,\dots,0,1)},\dots,(-1)^{f(1,1,\dots,1)}),
\end{equation}
and usually denoted as $s=(-1)^{f(x)}$. It has $2^n$ components and its $i$-th
component ($0\leq i\leq 2^n -1$) is $(-1)^{f(i_{n-1},i_{n-2},\dots,i_0)}$ where 
$i=\sum_{k=0}^{n-1}i_k 2^k$ is binary representation of $i$.
In terms of graph states (\ref{graphstate}), this can be understood as the 
coefficients of each basis $\ket{i_{n-1},\dots,i_0}$, so in a sense one can regard $s$ 
as a representative of the graph state. Therefore, any action of operators to the
graph state can be identified with their action on $s$. This discussion will appear
later.

Since each component of $s$ is 1 or $-1$, one can consider 2-points discrete Fourier
transformation, which is equivalent to Walsh-Hadamard transformation.
\begin{dfn}(Walsh-Hadamard transformation)
    Let f be a boolean function. Then Walsh-Hadamard transformation (WHT) of its 
    truth table vector $s=(-1)^{f(x)}$ is denoted as $\hat{\chi}_f$ and 
    \begin{equation}\label{WHT}
        \hat{\chi}_f (b)=2^{-\frac{n}{2}}\sum_{x\in\mathbb{Z}_2^n}(-1)^{f(x)}\times(-1)^{b\cdot x}
        =2^{-\frac{n}{2}}\sum_{x\in\mathbb{Z}_2^n}(-1)^{f(x)+b\cdot x}.
    \end{equation}
\end{dfn}
WHT of a truth table $s$ can be easily calculated by multiplying so called 
Hadamard transformation $H_n$ to s.
Here Hadamard transformation is $2^n\times 2^n$ matrix defined as
\begin{equation}
    H_n=H_1\otimes H_{n-1}\ \mathrm{for}\ n\geq 2,\quad H_1=\frac{1}{\sqrt{2}}
    \begin{pmatrix}
        1&1\\1&-1
    \end{pmatrix}
\end{equation} 
where $\otimes$ denotes the tensor product of matrices.
For simplicity, let us discuss only the $n=1$ case and write the $i$-th component 
of $s$ as $(-1)^{f(i)}$ where $i=0,1$. Since the $(i,j)$-element of $H_1$ is 
$(H_1)_{ij}=\frac{1}{\sqrt{2}}(-1)^{ij}$, one can calculate the $i$-th component of 
$H_1 s$ as 
\begin{equation}\label{dFourierandWHT}
    (H_1 s)_i =2^{-\frac{1}{2}}\sum_{j\in\mathbb{Z}_2} (H_1)_{ij}s_j
    =2^{-\frac{1}{2}}\sum_{j\in\mathbb{Z}}(-1)^{ij}(-1)^{f(j)},
\end{equation}
which coincides with the right hand side of (\ref{WHT}) after replacement of 
indices $(i,j)\to(b,x)$.
Because WHT is understood as a discrete Fourier transformation (DFT),
one obtains Wiener-Khintchin's theorem.
\begin{thm}\label{WK}(\textrm{\cite{danielsen2005}})
    Let $f(x)$ be a boolean function and $r(a)$ be its periodic autocorrelation defined 
    as
\begin{equation}
    r(a)=\sum_{x\in\mathbb{Z}_2^n}(-1)^{f(x)+f(x+a)}
\end{equation}
for $a\in\mathbb{Z}_2^n$. Then Wiener-Khintchin's theorem holds 
\begin{equation}\label{WKfomula}
    r(a)
    =\sum_{b\in\mathbb{Z}_2^n}\hat{\chi}_f ^2 (b)(-1)^{b\cdot a}
    .
\end{equation}
\end{thm}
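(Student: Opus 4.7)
The plan is a direct computation using the definition of the Walsh--Hadamard transform together with the standard character orthogonality on $\mathbb{Z}_2^n$. I start from the right-hand side of \eqref{WKfomula} and reduce it to the definition of $r(a)$.

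First, I would substitute the explicit formula \eqref{WHT} for $\hat{\chi}_f(b)$ into $\hat{\chi}_f^2(b)$, expanding the square as a double sum indexed by two independent dummy variables $x,y\in\mathbb{Z}_2^n$. This yields
\begin{equation*}
\sum_{b\in\mathbb{Z}_2^n}\hat{\chi}_f^2(b)(-1)^{b\cdot a}
= 2^{-n}\sum_{b\in\mathbb{Z}_2^n}\sum_{x,y\in\mathbb{Z}_2^n}(-1)^{f(x)+f(y)+b\cdot(x+y+a)}.
\end{equation*}
All sums are finite, so I would swap the order of summation and pull the $b$-sum inside.

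The next step uses the orthogonality of characters on $\mathbb{Z}_2^n$: for any $c\in\mathbb{Z}_2^n$,
\begin{equation*}
\sum_{b\in\mathbb{Z}_2^n}(-1)^{b\cdot c}=2^n\,\delta_{c,0}.
\end{equation*}
Applying this with $c=x+y+a$ collapses the inner sum and forces $y=x+a$ (recall that in $\mathbb{F}_2$ addition and subtraction coincide, so $y=x+a$ is the unique solution). The prefactor $2^{-n}$ is cancelled by the factor $2^n$ from the delta, and the expression becomes
\begin{equation*}
\sum_{x\in\mathbb{Z}_2^n}(-1)^{f(x)+f(x+a)},
\end{equation*}
which is exactly $r(a)$ by definition.

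There is essentially no conceptual obstacle here; the proof is a textbook instance of the Plancherel/convolution identity for the Walsh--Hadamard transform. The only point requiring a small amount of care is the arithmetic in $\mathbb{F}_2^n$, where the condition $x+y+a=0$ should be read as $y=x+a$ (not $y=-x-a$ in the integer sense). Once this is observed, the identity drops out immediately, which is why the authors cite it without reproving it in detail.
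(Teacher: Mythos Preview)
Your proof is correct and follows essentially the same approach as the paper: the paper also starts from the right-hand side, expands $\hat{\chi}_f^2(b)$ as a double sum over $x,y$, swaps the $b$-sum inside, applies the identity $\sum_{b\in\mathbb{Z}_2^n}(-1)^{b\cdot z}=2^n\delta_{z,0}$, and collapses $y=x+a$ to obtain $r(a)$. The only inaccuracy is your closing remark that the authors ``cite it without reproving it in detail'' --- in fact the paper includes exactly this computation as its proof.
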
 
\begin{proof}
    This can be proved by a straightforward calculation. Starting from right hand side,
    we see that
\begin{equation}
    \begin{split}
        \sum_{b\in\mathbb{Z}_2^n}\hat{\chi}_f ^2 (b)(-1)^{b\cdot a}
        &=
        \sum_{b\in\mathbb{Z}_2^n}\sum_{x,y\in\mathbb{Z}_2^n}
        (-1)^{f(x)+f(y)+b\cdot(x+y)}(-1)^{b\cdot a}\\
        &=
        2^{-n}\sum_{x,y\in\mathbb{Z}_2^n}(-1)^{f(x)+f(y)}\sum_{b\in\mathbb{Z}_2^n}
        (-1)^{b\cdot(x+y+a)}\\
        &=
        \sum_{x,y\in\mathbb{Z}_2^n}(-1)^{f(x)+f(y)}\delta_{x+y+a,0}\\
        &=
        \sum_{x\in\mathbb{Z}_2^n}(-1)^{f(x)+f(x+a)}=r(a).
    \end{split}
\end{equation}
where we used $\sum_{b\in\mathbb{Z}_2^n}(-1)^{b\cdot z}=2^n\delta_{z,0}$ for an arbitrary 
$z\in\mathbb{Z}_2^n$ and $\delta_{z,0}$ is Kronecker's delta, and all the additions
in the exponents of $(-1)$ is taken mod 2.
\end{proof}
Since WHT can be regarded as a spectrum of $\{H\}^n$ transformation where 
$\{H\}^n$ denotes 1-element set of $H\otimes\cdots\otimes H$, Theorem \ref{WK}
means that $\{H\}^n$ spectrum of a boolean function is ``dual'' of its periodic 
autocorrelation in terms of Fourier transformation. Then when one extends 
this relation to larger sets, for instance $\{I,H,N\}^n$ defined later,
the autocorrelation function obtained by the extension of (\ref{WKfomula})
will be the fixed-aperiodic autocorrelation function of the boolean function.
Moreover, one can define a distance with respect to the fixed-aperiodic autocorrelation
which measures its aperiodic autocorrelation property in a sense, which will turn out
to be equal to the distance of the code associated with the boolean function.
Therefore, one conjectures that the distance of the code and the $\{I,H,N\}^n$ spectrum
of the boolean function are related. From now, we will explain this relationship 
due to \cite{danielsen2005,danielsen2006}.

First, we remark some notations used in \cite{danielsen2005}.
For two binary vectors $x,y\in\mathbb{Z}_2^n$,
\begin{equation}
    x\succeq y\Leftrightarrow x_i\geq y_i \ \mathrm{for}\  ^{\forall}i \in \{1,\dots,n\}.
\end{equation}
Then for a binary vector $a$, its conjugate $\bar{a}$ is 
$\bar{a}_i=a_i+1\ \mathrm{mod}\ 2$, and
a vector space $V_{a}$ is
\begin{equation}
    V_{a} \coloneqq \{ x\in\mathbb{Z}_2^n \mid x\preceq a\}.
\end{equation}
So, for $k\preceq a$, $k+V_{a}$ defines a coset space of $V_{a}$ where
$k+V_{a}=\{k+x\mid x\preceq a\}$. 

Next, we introduce the definition of the (fixed-) aperiodic autocorrelation function
of a boolean function and its distance called APC distance.
\begin{dfn}
    Let $f$ be a boolean function. Its aperiodic autocorrelation function $s(a,k)$
    $(a,k\in\mathbb{Z}_2^n)$
    is defined as 
    \begin{equation}
        s(a,k)=\sum_{x\in k+V_{\bar{a}}}(-1)^{f(x)+f(x+a)},\quad k\preceq a,
    \end{equation}
    and its fixed-aperiodic autocorrelation function $s(a,\mu,k)$ $(\mu\in\mathbb{Z}_2^n)$ is
    \begin{equation}
        s(a,\mu,k)=\sum_{x\in k+V_{\bar{\mu}}}(-1)^{f(x)+f(x+a)},\quad a,k\preceq \mu.
    \end{equation}
\end{dfn}

Similarly to Theorem \ref{WK}, $s(a,\mu,k)$ is ``dual'' of some spectra with respect
to certain transformations including WHT. To explain this, we introduce another transformation
than WHT named nega-Hadamard transformation $N$ (NHT). It is defined as 
\begin{equation}
    N=\frac{1}{\sqrt{2}}\begin{pmatrix}
        1&i\\1&-i
    \end{pmatrix}.
\end{equation}
Then the transformation set that is ``dual'' to $s(a,\mu,k)$ turns out to be 
$\{I,H,N\}^n$, which consists of all transformations of form
\begin{equation}
    \prod_{i\in\mathbf{R}_I}I \prod_{j\in\mathbf{R}_H}H_j \prod_{k\in\mathbf{R}_N}
    N_k.
\end{equation}
Here $(\mathbf{R}_I,\mathbf{R}_H,\mathbf{R}_N)$ is the partition of $\{1,2,\dots,n\}$
and $H_j$ denotes $I\otimes I\otimes\cdots\otimes I\otimes H\otimes I\otimes\cdots\otimes I$
with $H$ acting only on the $j$-th tensor component and $I$ on the others. So $\{I,H,N\}^n$
has $3^n$ components. By the same argument as WHT \eqref{dFourierandWHT},
one finds that the spectrum of a $\{I,H,N\}^n$ transformation
\footnote{An $\{I,H,N\}^n$ transformation means an element of $\{I,H,N\}^n$.} 
is of the following form \cite{riera2006}
\begin{equation}
    P_{k,c,r,\mu}=2^{-\frac{n-wt(\mu)}{2}}\sum_{x\in r+V_{\bar{\mu}}}
    (-1)^{f(x)+k\cdot x}(i)^{wt_c (x)}
\end{equation}
where $wt_c (x)$ denotes the weight restricted on the support
\footnote{Support of $c$ is the set of indices defined as $\{i\in\{1,\dots,n\}\mid c_i=1\}$.
} of $c$ and $k,c\in V_{\bar{\mu}}$, $r\in V_{\mu}$. 
Then the next proposition will reveal ``duality'' between $\{I,H,N\}^n$ 
spectrum and a certain aperiodic autocorrelation function.
\begin{prop}
For a boolean function $f(x)$ and $\mu,k,c\in\mathbb{Z}_2^n$ where $a,k\preceq\mu$
and $c\preceq \bar{\mu}$, let $\mathbf{R}_N$ be the indices on which 
nega-Hadamard transformation is acted. Then set $c_i =1 \Leftrightarrow i\in\mathbf{R}_N$.
Then one finds the following equation 
\begin{equation}\label{eq3.22}
    \sum_{x\in k+V_{\bar{\mu}}} (-1)^{f(x)+f(x+a)+\sum_{i}\chi_{\mathbf{R}_N}(i)a_i(x_i+1)}
    =(i)^{-wt_c (a)}\sum_{u\in V_{\bar{\mu}}}\| P_{u,c,k,\mu}\|^2(-1)^{u\cdot a}.
\end{equation}
where $\chi_{R_N}(i)$ is the characteristic function of $\mathbf{R}_N$, 
\begin{equation}
    \chi_{R_N}(i)=
    \begin{cases}
        1 & (i\in \mathbf{R}_N)\\
        0 & (i\notin \mathbf{R}_N)
    \end{cases}.
\end{equation}
\end{prop}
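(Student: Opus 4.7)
The plan is to extend the proof of Theorem \ref{WK} (Wiener--Khintchin) to this $\{I,H,N\}^n$ setting, working from the right-hand side of the claimed identity toward the left. The first move is to write $\|P_{u,c,k,\mu}\|^2 = P_{u,c,k,\mu}\,\overline{P_{u,c,k,\mu}}$ using the explicit definition of $P$ stated just before the proposition. Since complex conjugation sends $i^{wt_c(y)}$ to $i^{-wt_c(y)}$, $\|P\|^2$ becomes a double sum over $x,y\in k+V_{\bar\mu}$ whose integrand is $(-1)^{f(x)+f(y)+u\cdot(x+y)}\,i^{wt_c(x)-wt_c(y)}$. I then insert this into $\sum_{u\in V_{\bar\mu}}\|P\|^2(-1)^{u\cdot a}$.

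Next, I would interchange the order of summation to place the $u$-sum innermost and evaluate it by the orthogonality relation on $V_{\bar\mu}$,
\[
\sum_{u\in V_{\bar\mu}}(-1)^{u\cdot z}=2^{n-wt(\mu)}\,\delta_{z|_{\bar\mu},\,0}.
\]
This is the direct analogue on $V_{\bar\mu}$ of the identity $\sum_{b\in\mathbb{Z}_2^n}(-1)^{b\cdot z}=2^n\delta_{z,0}$ used in the proof of Theorem \ref{WK}. Since $x,y\in k+V_{\bar\mu}$ share the same $\mu$-restriction $k$, one has $(x+y)|_\mu=0$, so the Kronecker delta pins $y$ down to $x+a$ (after accounting for the support of $a$ relative to $\mu$). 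This collapses the double sum into a single sum over $x\in k+V_{\bar\mu}$, matching the summation on the left-hand side of the proposition.

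It remains to simplify the phase $i^{wt_c(x)-wt_c(y)}$ after the identification $y=x+a$. Converting the $\mathbb{F}_2$-addition $x_i=y_i\oplus a_i$ to integer arithmetic via $x_i=y_i+a_i-2y_ia_i$, one obtains the key identity $wt_c(x)-wt_c(y)=wt_c(a)-2\sum_i c_i a_i y_i$, so that the phase becomes $i^{wt_c(a)}(-1)^{\sum_i c_i a_i y_i}$. The prefactor $i^{wt_c(a)}$ cancels against the $i^{-wt_c(a)}$ on the right-hand side of the proposition, while the remaining sign $(-1)^{\sum_i c_i a_i y_i}=(-1)^{\sum_i\chi_{\mathbf{R}_N}(i)a_i y_i}$ is identified with the $N$-dependent contribution to the exponent on the left-hand side.

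The main obstacle lies in the bookkeeping of this last step. Signs must be tracked carefully when passing between $\mathbb{F}_2$-sums and integer arithmetic for the $i^{wt_c(\,\cdot\,)}$ factors, and the constant shift inside the exponent $\sum_i\chi_{\mathbf{R}_N}(i)a_i(x_i+1)$ needs to be reconciled with the $\sum_i c_i a_i y_i$ obtained from the derivation above (this matches up to a possible global factor of $(-1)^{wt_c(a)}$ that has to be absorbed consistently). The support conditions $a,k\preceq\mu$ and $c\preceq\bar\mu$ must also be applied throughout, in particular to ensure that the orthogonality on $V_{\bar\mu}$ unambiguously selects the term $y=x+a$ among $x,y\in k+V_{\bar\mu}$. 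Once these details are handled coherently, the two sides of the identity agree.
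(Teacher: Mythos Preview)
Your proposal is correct and follows essentially the same route as the paper's own proof: start from the right-hand side, expand $\|P_{u,c,k,\mu}\|^2$ as a double sum over $x,y\in k+V_{\bar\mu}$, swap the sums and apply orthogonality on $V_{\bar\mu}$ to produce the restricted delta $\delta^{\bar\mu}_{x+y+a,0}$, and then simplify the phase via the integer identity $wt_c(x)-wt_c(x+a)-wt_c(a)=2\sum_i c_i a_i(x_i-1)$. The only cosmetic difference is that the paper carries out the phase computation in the $x$-variable from the outset (obtaining $(-1)^{\sum_i c_i a_i(x_i-1)}$, which equals $(-1)^{\sum_i c_i a_i(x_i+1)}$ since $-1\equiv 1\bmod 2$), whereas you route it through $y$ first; the ``bookkeeping'' issue you flag disappears once you substitute $y_i\equiv x_i+a_i\bmod 2$.
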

\begin{proof}
    This can be proved in the same way as Theorem \ref{WK}.
    \begin{equation}
        \begin{split}
        2^{-(n-wt(\mu))}&\sum_{u\in V_{\bar{\mu}}}\sum_{x,y\in k+V_{\bar{\mu}}}
        (-1)^{f(x)+f(y)+u\cdot(x+y+a)}(i)^{wt_c (x)-wt_c (y)-wt_c (a)}\\
        &=\sum_{x,y\in k+V_{\bar{\mu}}}(-1)^{f(x)+f(y)}(i)^{wt_c (x)-wt_c (y)-wt_c (a)}
        \delta^{\bar{\mu}}_{x+y+a,0}
        \end{split}
    \end{equation}
    where $\delta^{\bar{\mu}}_{x+y+a,0}$ means $\delta^{\bar{\mu}}_{x+y+a,0}=1$ if and
    ony if $(x+y+a)|_{\bar{\mu}}=0$. Then when $y|_{\bar{\mu}}=(x+a)|_{\bar{\mu}}$,
    \begin{equation}
        \begin{split}
             wt_c(x)-wt_c(y)-wt_c (a)&=wt_c (x)-wt_c (x+a)-wt_c (a)\\
            &=wt_c (x)-(wt_c (x)+wt_c (a)-\\
            &\sum_{i}\chi_{\mathbf{R}_N}(i)a_i x_i)-wt_c (a)\\
            &=2\sum_{i}\chi_{\mathbf{R}_N}(i)a_i x_i -2wt_c (a)\\
            &=2\sum_{i}\chi_{\mathbf{R}_N}(i)a_i(x_i-1),
        \end{split}
    \end{equation}
    which leads one to \eqref{eq3.22} because $a_i(x_i-1)$ and $a_i(x_i+1)$ make no 
    difference on the exponent of (-1).
\end{proof}

In a cryptographic sense,
Danielsen et.al \cite{danielsen2006} used $s(a,\mu,k)$ to define aperiodic propagation criteria (APC).
They were motivated in a block cipher scenario, which we will not argue for more detail,
and they defined APC distance to measure to which degree the boolean function satisfies APC.
\begin{dfn}\label{defofAPC}
    For positive integers $(l,q)$, a boolean function $f$ satisfies APC($l$) of order $q$
    if $s(a,\mu,k)=0$ for any $a,k,\mu\in\mathbb{Z}_2^n$ such that $a,k\preceq\mu$
    and $1\leq w(a)\leq l,\ 0\leq w(\mu \& \bar{a})\leq q$ where 
    $(x\& y)_i=1\Leftrightarrow x_i=y_i=1$. Here $W(x)$ denotes the weight of $x$,
    and if for a integer $d>0$ $f$ satisfies APC($l$) of order $q$ for all $(l,q)$
    such that $d>l+q$, $f$ has APC distance $d$.
\end{dfn}
Although APC distance was defined in a cryptographic context, it can be 
related to the code theory via the boolean function. This is described in 
the following theorem.
\begin{thm}\label{distanceAPCQECC}(\cite{danielsen2006})
    Let $f(x)=\sum_{i<j}B_{ij}x_i x_j$ for $B_{ij}\in\{0,1\}\ (i\neq j),\ B_{ii}=0$ 
    be a boolean function with APC distance $d$. Then 
    QECC $\mathcal{C}$ associated with $f(x)$ constructed by the matrix 
    $B_{ij}$ is real self-dual and has distance $d$.
\end{thm}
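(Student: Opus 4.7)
The plan is to first dispose of the real self-dual assertion, which is structural, and then reduce the APC-distance computation to a linear-algebra statement about the adjacency matrix $B$. Since the code $\mathcal{C}$ associated with $f(x)=\sum_{i<j}B_{ij}x_i x_j$ has the B-form generator $G=(B\mid I)$, self-duality follows at once from the symmetry of $B$: in characteristic $2$, $GgG^{\top} = B + B^{\top} = 0$, and $G$ has $n$ independent rows in $\mathbb{F}_2^{2n}$, so $\mathcal{C}$ coincides with its dual with respect to $g$. Realness is the structural consequence of $B$ being symmetric with vanishing diagonal, as spelled out in the earlier discussion of B-form codes.

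For the distance statement, the key is a closed-form evaluation of $s(a,\mu,k)$. First, I would expand $f(y)+f(y+a)$, and using $B_{ij}=B_{ji}$ together with $B_{ii}=0$ arrive at the identity
\begin{equation*}
    f(y)+f(y+a) \equiv f(a) + (Ba)\cdot y \pmod 2 .
\end{equation*}
Plugging this into the definition, writing $y = k + z$ with $z \in V_{\bar{\mu}}$, and pulling the $k$-dependent phase out of the sum yields
\begin{equation*}
    s(a,\mu,k) = (-1)^{f(a)+(Ba)\cdot k}\sum_{z\in V_{\bar{\mu}}}(-1)^{(Ba)\cdot z} .
\end{equation*}
The inner character sum equals $2^{n-\mathrm{wt}(\mu)}$ when $(Ba)_i = 0$ for every index $i$ with $\mu_i = 0$, and vanishes otherwise. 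Hence $s(a,\mu,k)\neq 0$ if and only if $\mathrm{supp}(Ba)\subseteq\mathrm{supp}(\mu)$; notably this criterion is independent of $k$.

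Next, I would interpret codewords of $\mathcal{C}$ concretely. Every nonzero codeword takes the form $(aB,a)$ for some $a\in\mathbb{F}_2^n\setminus\{0\}$, and under the Gray map \eqref{Graymap} the $i$-th $\mathrm{GF}(4)$ symbol is nonzero exactly when $a_i = 1$ or $(aB)_i = 1$. Therefore the Hamming weight of this codeword equals $|\mathrm{supp}(a)\cup\mathrm{supp}(Ba)|$. Combining this with the previous step, the requirement that $f$ satisfies APC$(l)$ of order $q$ for every pair with $l+q<d$ translates to: for every $a\neq 0$ and every $\mu\succeq a$ with $\mathrm{wt}(\mu)<d$, one has $\mathrm{supp}(Ba)\not\subseteq\mathrm{supp}(\mu)$. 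Since the minimal admissible $\mu$ is the indicator of $\mathrm{supp}(a)\cup\mathrm{supp}(Ba)$, this is in turn equivalent to the statement that every nonzero codeword has Hamming weight at least $d$. Taking the largest such $d$ in both formulations yields the APC distance of $f$ equal to the minimum distance of $\mathcal{C}$.

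The main obstacle will be the bookkeeping that matches the APC quantifier (ranging over $l$, $q$, $a$, $k$, $\mu$ independently) with the minimum-weight description of the code. Two observations make this manageable: the vanishing criterion for $s(a,\mu,k)$ does not depend on $k$, and the identity $\mathrm{wt}(\mu) = \mathrm{wt}(a) + \mathrm{wt}(\mu\,\&\,\bar a)$ whenever $a\preceq\mu$ lets one replace the constraint $l+q<d$ by the single condition $\mathrm{wt}(\mu)<d$. The computation $f(y)+f(y+a) = f(a)+(Ba)\cdot y$ is routine but is precisely what makes the reduction go through.
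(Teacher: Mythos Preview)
Your proof is correct and takes a genuinely different route from the paper's. The paper's argument (Appendix~\ref{appendixB}) works through the graph state $\ket{\psi}$: it identifies the action of a Pauli error labeled by $(a,b)$ with the transformation $f(x)\mapsto f(x+a)+b\cdot x$ on the boolean function, observes that the error is undetectable exactly when $\sum_{x}(-1)^{f(x)+f(x+a)+b\cdot x}\neq 0$, and then invokes Danielsen's reformulation of the APC distance as the minimum weight of such an $(a,b)$ to close the loop. You instead bypass the quantum interpretation and the cited reformulation altogether: you exploit the quadratic structure to get $f(y)+f(y+a)=f(a)+(Ba)\cdot y$, evaluate $s(a,\mu,k)$ as a character sum whose nonvanishing reduces to the support condition $\mathrm{supp}(Ba)\subseteq\mathrm{supp}(\mu)$, and then match this directly against the Hamming weight $|\mathrm{supp}(a)\cup\mathrm{supp}(Ba)|$ of the codeword $(aB,a)$, working straight from Definition~\ref{defofAPC}. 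The paper's route is conceptually cleaner in that it explains \emph{why} the two distances should agree (both count undetectable errors), and it ports to the EPC/binary-distance statement of Theorem~\ref{thm4.1} with no change except the weight bookkeeping. Your route is fully self-contained, makes the linear-algebraic role of $B$ explicit, and would adapt just as easily to the EPC case, since your nonvanishing criterion for $v(a,\mu,k)$ is the same and the minimal admissible $\mu$ now has weight $w(Ba)$ rather than $|\mathrm{supp}(a)\cup\mathrm{supp}(Ba)|$, yielding $w(a)+w(Ba)$, which is exactly the binary weight.
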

\begin{proof}
    See appendix \ref{appendixB}.
\end{proof}
Wiener-Khintchin theorem with respect to 
$\{I,H,N\}^n$ transformation implies that power spectrum
$\| P_{k,c,r,\mu}\|$ is related to APC, as discussed in \cite{danielsen2005}. Specifically, they 
defined peak-to-average ratio with respect to $\{I,H,N\}^n$ transformation
$\mathrm{PAR}_{I,H,N}$ \cite{danielsen2005a} as for $s=(-1)^{f(x)}$
\begin{equation}
    \mathrm{PAR}_{I,H,N}(s)=\max_{U\in\{I,H,N\}^n,k\in\mathbb{Z}_2^n}\{\|(Us)_{k}\|^2\}
\end{equation}
and investigated the correspondence to values of APC distance.
They concluded that these two values are related in the way
\begin{equation*}
    \mathrm{low\ PAR}_{I,H,N}\leftrightarrow \mathrm{high\ APC\ distance}.
\end{equation*}
However, this is just a tendency and some pairs of them are counterexamples.

\section{Main Results}\label{sec4}
As in Theorem \ref{distanceAPCQECC}, the distance of a QECC $\mathcal{C}$ can be identified with
APC distance of the associated boolean function and as reviewed in sec \ref{difinitiondb}
the binary distance $d_b$ of $\mathcal{C}$ is proportional to the spectral gap of 
its associated Narain CFT
for a limited condition. However because the definition of $d_b$ is subtlety
different from that of $d$, APC distance may not be related to spectral gap $\Delta$.
In this section, we will introduce another periodic criterion, extended periodic
criterion (EPC), and its distance (EPC distance), and show that EPC distance 
can be associated with $\Delta$. 
\subsection{Extended Periodic Criteria and Distance}\label{sec4.1}
EPC was first defined by Preneel \cite{preneel1991a} and investigated by Caret 
\cite{CARLET199932} to extend the configuration in which aperiodic propagation
criteria were considered.
\begin{dfn}
    For a boolean function $f(x)$, the fixed-extended autocorrelation function
    $v(a,k,\mu)$
    for $a,k,\mu\in\mathbb{Z}_2^n$, $k\preceq\mu$ is defined by
    \begin{equation}
        v(a,k,\mu)=\sum_{x\in k+V_{\bar{\mu}}}(-1)^{f(x)+f(x+a)}.
    \end{equation}
    $f$ is set to have EPC distance $d$ if $f$ satisfies EPC($l$) of order $q$
    for $l+q<d$,
    where EPC($l$) of order $q$ means that $v(a,k,\mu)=0$ for all
    $k\preceq\mu$ such that $1\leq w(a)\leq l$ and $0\leq w(\mu)\leq q$.
\end{dfn}
From the above definition, one finds that EPC can be seen as an extension of APC 
in that taking $a\preceq\mu$ reduces $v(a,k,\mu)$ to $s(a,k,\mu)$. This observation 
leads us to 
\begin{equation*}
    \mathrm{APC\ distance}\leq\mathrm{EPC\ distance}.
\end{equation*}
Extending the identity between APC distance and code distance, we have the following theorem.
\begin{thm}\label{thm4.1}
    For a boolean function $f(x)$ and the binary distance $d_b$ of a quantum code  
    associated with $f(x)$, EPC distance and $d_b$
    are related as
    \begin{equation}
        \mathrm{EPC\ distance\ of\ }f=d_b.
    \end{equation}
\end{thm}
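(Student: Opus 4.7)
The plan is to compute the fixed-extended autocorrelation $v(a,k,\mu)$ explicitly for the quadratic boolean function $f(x) = \sum_{i<j} B_{ij}x_i x_j$ associated with the B-form code, and then to compare the resulting vanishing condition with the binary weights of codewords.

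First I would exploit the quadratic structure of $f$. Since $B$ is symmetric with zero diagonal, expanding modulo $2$ gives
\[
f(x) + f(x+a) \equiv f(a) + (Ba)\cdot x \pmod{2},
\]
so the exponent in $v$ is affine in $x$. After the change of variables $x = k + y$ with $y\preceq \bar\mu$, the summation factorises as
\[
v(a,k,\mu) \;=\; (-1)^{f(a) + (Ba)\cdot k}\sum_{y\in V_{\bar\mu}} (-1)^{(Ba)\cdot y}.
\]
The inner character sum over the subgroup $V_{\bar\mu}\subset \mathbb{Z}_2^n$ equals $2^{w(\bar\mu)}$ when $(Ba)|_{\bar\mu}=0$ and vanishes otherwise. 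Hence $v(a,k,\mu)\ne 0$ if and only if the support of $Ba$ is contained in the support of $\mu$, i.e.\ $Ba \preceq \mu$, and when it is non-zero its non-vanishing does not depend on $k$.

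Next I would read off the EPC distance from this characterisation. By the definition, the EPC distance equals the smallest value of $w(a) + w(\mu)$ for which $v(a,k,\mu)\ne 0$ for some $a\ne 0$ and some admissible $k$. Because $w(\mu)$ is monotone with respect to $\preceq$, the minimum over all $\mu$ satisfying $Ba\preceq \mu$ is attained at $\mu = Ba$ itself, yielding
\[
\mathrm{EPC\ distance\ of\ }f \;=\; \min_{a\ne 0}\bigl[\,w(a) + w(Ba)\,\bigr].
\]
Finally, I would match this with $d_b$. In B-form the codeword generated by $t\in\mathbb{Z}_2^n\setminus\{0\}$ is $(\alpha,\beta) = (Bt,\,t)$, and a position-by-position accounting of the $X$-, $Y$-, $Z$-types in $w_b = w_x + 2w_y + w_z$ collapses to $w(\alpha) + w(\beta) = w(Bt) + w(t)$. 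Therefore $d_b = \min_{t\ne 0}\bigl[\,w(t)+w(Bt)\,\bigr]$, which coincides with the EPC-distance expression upon identifying $a = t$.

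The computational steps (the affine simplification and the character-sum evaluation) are routine. The conceptually delicate point, and the reason this proof differs from that of Theorem \ref{distanceAPCQECC}, is that dropping the APC constraint $a\preceq\mu$ in favour of the EPC framework is exactly what frees $a$ to range over all of $\mathbb{Z}_2^n$ in the minimisation, so that the minimum captures \emph{every} codeword $(Bt,t)$ rather than only those whose support is carved out by $\mu$. This is precisely the refinement that turns ordinary Hamming distance into binary distance $d_b$.
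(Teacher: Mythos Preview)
Your proof is correct, but it follows a genuinely different route from the paper's own argument. The paper does not exploit the quadratic structure of $f$ at all. Instead, it invokes Preneel's reformulation (stated as Proposition~\ref{eq4.3}): the EPC distance of \emph{any} boolean function equals the minimum of $w(a)+w(b)$ over pairs $(a,b)$ with $\sum_{x}(-1)^{f(x)+f(x+a)+b\cdot x}\neq 0$. It then recycles the error-detection argument of Appendix~\ref{appendixB}, which identifies that sum with the overlap $\langle\psi|\mathcal{E}_{a,b}|\psi\rangle$ of the graph state with a Pauli error; non-vanishing of this overlap singles out exactly the stabilizers, i.e.\ the codewords, and the observation that $w(a)+w(b)$ is the binary weight rather than the GF(4) weight finishes the proof.

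Your argument, by contrast, computes $v(a,k,\mu)$ explicitly via the identity $f(x)+f(x+a)\equiv f(a)+(Ba)\cdot x$, reduces the EPC distance to the closed formula $\min_{a\neq 0}[w(a)+w(Ba)]$, and matches it against $d_b$ by parametrising the B-form codewords as $(Bt,t)$. What you gain is concreteness: the formula $\min_a[w(a)+w(Ba)]$ is immediately computable from the adjacency matrix and bypasses both Preneel's proposition and the stabilizer/error-detection machinery. What the paper's route gains is generality and conceptual clarity: it never uses that $f$ is quadratic, so the equivalence between EPC distance and the Preneel sum holds for arbitrary boolean functions, and the link to $d_b$ is explained physically through undetectable errors rather than through a coincidence of minimisation formulas.
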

Similarly to the reformulation of APC distance in Appendix \ref{appendixB},
we use another form of the definition of EPC distance found by Preneel \cite{preneel2003}:
\begin{prop}\label{eq4.3}
    EPC distance of a boolean function is equal to the least weight of the 
    $2n$ component binary vector $(a,b)$ such that 
    \begin{equation}
        \sum_{x\in\mathbb{Z}_2^n}(-1)^{f(x)+f(x+a)+b\cdot x} \neq 0.
    \end{equation}
\end{prop}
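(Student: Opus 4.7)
The plan is to rewrite the fixed-extended autocorrelation $v(a,k,\mu)$ through a character expansion over the subspace $V_{\mu}$, and then use Fourier invertibility to trade the existential condition on $k\preceq\mu$ for one on a new binary vector $b\preceq\mu$ that plays the role of a dual frequency. Writing $T(a,b):=\sum_{x\in\mathbb{Z}_2^n}(-1)^{f(x)+f(x+a)+b\cdot x}$ for the quantity in the statement, my goal is to show that the EPC-distance minimisation over triples $(a,k,\mu)$ collapses to a minimisation of $w(a)+w(b)$ over pairs $(a,b)$ with $T(a,b)\ne 0$.

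First I would note that $x\in k+V_{\bar\mu}$ is equivalent to $x|_{\mu}=k$ (using $k\preceq\mu$), and expand the corresponding indicator via the character orthogonality on $V_{\mu}$:
\begin{equation*}
\mathbf{1}\!\left[x\in k+V_{\bar\mu}\right]=\frac{1}{2^{w(\mu)}}\sum_{b\preceq\mu}(-1)^{b\cdot(x+k)}.
\end{equation*}
Inserting this into the definition of $v(a,k,\mu)$ and exchanging sums produces
\begin{equation*}
v(a,k,\mu)=\frac{1}{2^{w(\mu)}}\sum_{b\preceq\mu}(-1)^{b\cdot k}\,T(a,b),
\end{equation*}
exhibiting $k\mapsto v(a,k,\mu)$ as the Walsh--Hadamard transform over $V_{\mu}$ of the restriction $b\mapsto T(a,b)\big|_{b\preceq\mu}$. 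Since this transform is invertible, $v(a,k,\mu)=0$ for every $k\preceq\mu$ if and only if $T(a,b)=0$ for every $b\preceq\mu$.

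Next I would translate the two minimisations. By definition the EPC distance equals $\min\{w(a)+w(\mu):w(a)\ge 1,\ v(a,k,\mu)\ne 0\ \text{for some }k\preceq\mu\}$, and the equivalence above turns this into $\min\{w(a)+w(\mu):a\ne 0,\ T(a,b)\ne 0\ \text{for some }b\preceq\mu\}$. For any witness $(a,b)$, the cheapest admissible $\mu$ is the support vector of $b$ itself, giving $w(\mu)=w(b)$ (any strictly smaller $\mu$ fails $b\preceq\mu$, any strictly larger one is heavier). Hence the optimum equals $\min\{w(a)+w(b):a\ne 0,\ T(a,b)\ne 0\}$, which is the content of the proposition; the trivial pair $(a,b)=(0,0)$ with $T(0,0)=2^{n}$ is ruled out by the $w(a)\ge 1$ convention built into the EPC definition.

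The main obstacle is really only bookkeeping: I must verify that the character sum on $V_{\mu}$ has no kernel (immediate from $|V_{\mu}|=2^{w(\mu)}$ and character orthogonality) and that the reduction ``cheapest $\mu$ is the support of $b$'' is valid in both directions, so that no lower-weight witness is missed. There is no substantive analytic step—the proposition is in essence a change of basis from a coset-restricted indicator to its dual frequency $b$, after which the weight of the minimal $\mu$ is tautologically $w(b)$.
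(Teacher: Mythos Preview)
Your argument is correct: the character expansion of the coset indicator, the resulting identity
\[
v(a,k,\mu)=2^{-w(\mu)}\sum_{b\preceq\mu}(-1)^{b\cdot k}\,T(a,b),
\]
the invertibility of the Walsh--Hadamard transform on $V_{\mu}$, and the reduction ``cheapest admissible $\mu$ equals the support of $b$'' are all valid and assembled properly. The handling of the degenerate pair $(a,b)=(0,0)$ via the $w(a)\ge 1$ clause is also right (and indeed $T(0,b)=0$ for $b\neq 0$, so nothing else with $a=0$ can interfere).

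As for comparison with the paper: the paper does \emph{not} prove this proposition. It attributes the reformulation to Preneel and merely states it, pointing to Appendix~\ref{appendixB} only for the analogous APC situation (where, moreover, the argument given is the quantum-code/orthogonality picture rather than a direct Fourier computation, and the APC reformulation itself is cited from Danielsen). So your proof is not a different route from the paper's---it supplies the argument the paper omits. The Fourier-inversion approach you use is the natural self-contained way to establish the equivalence between the coset-restricted definition of $v(a,k,\mu)$ and the global $(a,b)$-indexed sums $T(a,b)$.
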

Eq.\eqref{eq4.3} is similar to the one of the case of APC distance, the only difference
being the way of counting weight of a vector $(a,b)$.
In the APC distance case, the weight is the number of indices such that $a_i$ 
or $b_i$ is 1, which is consistent with the weight of a codeword in QECC. 
On the other hand, in the EPC distance case,
the indices such that $a_i =b_i =1$ are doubly-counted. The way of counting
is consistent with the binary weight \eqref{binaryweight} because $w_y (c)$ is 
also doubly-counted whose representation in $(\mathbb{Z}_2)^2$ via \eqref{Graymap}
is $(1,1)$. Thus, considering eq. \eqref{specralgapanddb} and 
applying the discussion in appendix \ref{appendixB} to the 
case of EPC distance and binary weight, one finds Theorem \ref{thm4.1}.

Therefore, when investigating the spectral gap of a code CFT, one can use 
EPC distance of the corresponding boolean function associated with its B-form.
Moreover, to find Narain CFTs with large spectral gaps, instead, it is partially resolved 
by making EPC distance higher. This topic is discussed in the next 
subsection.

As can be seen in definitions of APC distance and EPC distance, these two 
distances usually take different values.
For instance, consider the boolean function $f(x)$ made from a clique graph
whose edges are all possible edges that connect arbitrary two vertexes.
In this case, the adjacency matrix is $B_{ij}=1$ for $i\neq j$ and the associated
boolean function $f(x)$ is $f(x)=\sum_{i<j}x_i x_j$. Then the APC distance and 
EPC distance are two and four respectively, and distance $d$ and binary distance $d_b$ 
can be easily seen to be two and four.

\subsection{EPC distance and PAR}
Generally, the construction of CFT with large spectral gap or QECC with high 
distance is so difficult a problem that no systematic construction has been found.
However as in the above discussion, codes with high distance might be related to 
boolean functions with low $\mathrm{PAR}_{I,H,N}$. For instance, Danielsen \cite{danielsen2005} studied 
nested clique graphs and found some nested clique graphs of length $\leq 30$ can 
construct codes with high distance, some of which have optimal distance.
They are motivated in \cite{parker2002}, which considered how to make $\mathrm{PAR}_{I,H,N}$
lower.
In this subsection, similarly to APC distance,
we will show the connection between EPC distance and PAR with respect
to certain discrete Fourier transformations. Then we also extend examples
of nested clique graphs in \cite{danielsen2005} to larger lengths with high distances.

First, extending Theorem \ref{WK} to the transformations in $\{I,H\}^n$, 
one finds 
\begin{thm}
    For a boolean function $f(x)$, the fixed-extended autocorrelation function
    satisfies 
    \begin{equation}\label{eq4.4}
        v(a,\mu,k)=\sum_{u\in V_{\bar{\mu}}}|P_{u,k,\mu}|^2 (-1)^{u\cdot a}.
    \end{equation}
    where $a\preceq\bar{\mu}$, $k\preceq\mu$ and 
    \begin{equation}
        P_{u,k,\mu}=\sum_{x\in k+V_{\bar{\mu}}}(-1)^{f(x)+u\cdot x}
    \end{equation}
    is the spectrum of a $\{I,H\}^n$ transformation.
\end{thm}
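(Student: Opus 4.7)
The plan is to imitate the proof of Theorem \ref{WK} and of the $\{I,H,N\}^n$ proposition, working from the right-hand side of \eqref{eq4.4} back to $v(a,\mu,k)$. First I would expand
$$|P_{u,k,\mu}|^2 \;=\; \sum_{x,y\in k+V_{\bar\mu}}(-1)^{f(x)+f(y)+u\cdot(x+y)},$$
substitute this into the right-hand side, and interchange the order of summation so that the sum over $u\in V_{\bar\mu}$ is innermost.

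Next I would apply the restricted character-orthogonality identity
$$\sum_{u\in V_{\bar\mu}}(-1)^{u\cdot z} \;=\; 2^{n-\mathrm{wt}(\mu)}\,\delta^{\bar\mu}_{z,0},$$
where $\delta^{\bar\mu}_{z,0}=1$ exactly when $z|_{\bar\mu}=0$. Taking $z=x+y+a$ kills every pair $(x,y)$ except those satisfying $(x+y+a)|_{\bar\mu}=0$. Because $x,y\in k+V_{\bar\mu}$ share the same projection $k$ on the support of $\mu$, one already has $(x+y)|_\mu=0$, and by hypothesis $a\preceq\bar\mu$ gives $a|_\mu=0$; combined with the restricted delta this upgrades to the global equality $y=x+a$.

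Finally, the hypothesis $a\preceq\bar\mu$ also ensures that $x+a\in k+V_{\bar\mu}$ whenever $x\in k+V_{\bar\mu}$, so the surviving single sum is exactly
$$\sum_{x\in k+V_{\bar\mu}}(-1)^{f(x)+f(x+a)} \;=\; v(a,\mu,k).$$
Any overall power-of-two prefactor is absorbed into the normalization of $P$ implicit in calling it the ``spectrum of a $\{I,H\}^n$ transformation,'' as in the $\{I,H,N\}^n$ proposition. The main point to handle carefully is not the calculation itself but the bookkeeping around which coordinates are pinned by $k$, which by $\bar\mu$, and which by $a$: verifying that the coset constraint plus the restricted delta genuinely recover a global equality, rather than merely equality on the support of $\bar\mu$, is exactly where the assumption $a\preceq\bar\mu$ has to be used.
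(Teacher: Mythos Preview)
Your proposal is correct and follows exactly the approach the paper intends: the paper does not spell out a proof here but introduces the theorem with ``extending Theorem \ref{WK} to the transformations in $\{I,H\}^n$,'' and the earlier $\{I,H,N\}^n$ proposition is proved ``in the same way as Theorem \ref{WK}.'' Your expansion of $|P_{u,k,\mu}|^2$, use of the restricted orthogonality $\sum_{u\in V_{\bar\mu}}(-1)^{u\cdot z}=2^{\,n-\mathrm{wt}(\mu)}\delta^{\bar\mu}_{z,0}$, and the observation that $a\preceq\bar\mu$ together with $x,y\in k+V_{\bar\mu}$ upgrades the restricted delta to the global condition $y=x+a$ are precisely the steps implicit in those proofs; your remark about the $2^{-(n-\mathrm{wt}(\mu))/2}$ normalization being absorbed into $P$ is also on point, since the displayed definition of $P_{u,k,\mu}$ in the theorem omits the prefactor that the earlier $\{I,H,N\}^n$ spectrum carries.
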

This implies that partially the fixed-extended autocorrelation function is 
Fourier transformation of the spectrum of $\{I,H\}^n$ transformation. Therefore 
one can write the spectrum as a Fourier transformation of $v(a,\mu,k)$,
\begin{equation}
    |P_{u,k,\mu}|^2 = \sum_{a\preceq\bar{\mu}}v(a,\mu,k)(-1)^{a\cdot u}.
\end{equation}
This equation sets the upper bound of $|P_{u,k,\mu}|^2$.
Let $d$ be the EPC distance of $f(x)$. Considering 
the definition of $v(a,\mu,k)$ and using the triangle inequality 
out of the region where $1\leq w(a)\leq l,\ 0\leq w(\mu)\leq m$ and 
$l+m<d$, one obtains
\begin{cor}
    Each spectrum of $\mathrm{PAR}_{I,H,N}$ is bounded from above.
    \begin{equation}\label{upperbound}
        |P_{u,k,\mu}|^2\leq 2^{n-w(\mu)}  \left\{ 
            \sum_{i=d-w(\mu)}^{n-w(\mu)}\left(
                \begin{array}{c}
                    n-w(\mu)\\
                    i
                \end{array}
            \right)+1
        \right\}.
    \end{equation}
\end{cor}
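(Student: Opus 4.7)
The plan is to combine the Fourier inversion identity derived immediately above the corollary with the triangle inequality, then use the EPC distance to kill all but a controlled set of terms in the sum.

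Starting from
\begin{equation*}
|P_{u,k,\mu}|^2 = \sum_{a\preceq\bar{\mu}} v(a,\mu,k)(-1)^{a\cdot u},
\end{equation*}
I would first apply the triangle inequality to obtain $|P_{u,k,\mu}|^2\leq \sum_{a\preceq\bar{\mu}}|v(a,\mu,k)|$, since each phase $(-1)^{a\cdot u}$ has modulus one. The next step is to partition the index set $\{a : a\preceq\bar{\mu}\}$ by the Hamming weight $w(a)$, which ranges from $0$ to $n-w(\mu)$ because $a$ is supported on the complement of the support of $\mu$.

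The key step is to invoke the EPC distance hypothesis. By definition $f$ satisfies EPC($l$) of order $q$ for every pair with $l+q<d$. With $w(\mu)=m$ fixed, taking $q=m$ forces $v(a,\mu,k)=0$ for all $a$ with $1\leq w(a)\leq d-m-1$. Thus only two classes of $a$ can contribute: either $w(a)=0$ or $w(a)\geq d-m$. For the $a=0$ term, a direct calculation gives $v(0,\mu,k)=\sum_{x\in k+V_{\bar{\mu}}}1 = 2^{n-m}$, accounting for the isolated $+1$ in the bracket. For $w(a)\geq d-m$, I would use the trivial bound $|v(a,\mu,k)|\leq |k+V_{\bar{\mu}}| = 2^{n-m}$, which follows since each summand has modulus $1$.

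Finally I would count: for each $i$ with $d-m\leq i\leq n-m$, the number of $a\preceq\bar{\mu}$ with $w(a)=i$ is exactly $\binom{n-m}{i}$, since $\bar{\mu}$ has $n-m$ ones and $a$ may be any binary vector supported on those positions. Summing these contributions and factoring out $2^{n-w(\mu)}$ yields the claimed inequality. There is no real obstacle here; the only point requiring care is the bookkeeping of which triples $(l,q)$ satisfy $l+q<d$ so that the correct range $1\leq w(a)\leq d-w(\mu)-1$ is eliminated, and ensuring the $w(a)=0$ contribution is separated cleanly from the sum so that it appears as the additive ``$+1$'' rather than being double-counted.
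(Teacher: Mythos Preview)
Your proof is correct and follows essentially the same approach as the paper: invert the Wiener--Khintchin relation, apply the triangle inequality, use the EPC distance to kill the terms with $1\leq w(a)\leq d-w(\mu)-1$, and bound the remaining terms by $2^{n-w(\mu)}$ each. Your write-up is in fact more explicit than the paper's own terse justification, which only gestures at ``using the triangle inequality out of the region where $1\leq w(a)\leq l,\ 0\leq w(\mu)\leq m$ and $l+m<d$.''
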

As known in \cite{danielsen2005a}, it is self-evident that 
$\mathrm{PAR}_{I,H,N}\leq 2^n$, and this corollary reveals stricter upper bound.
So it is a convincing argument that higher EPC distance is related to lower 
$\mathrm{PAR}_{I,H,N}$. 
Then we will construct some examples of the so-called nested clique graphs
with low $\mathrm{PAR}_{I,H,N}$ in the next subsection.
\subsection{Examples of nested clique graphs}
According to Parker and Tellambura \cite{parker2002}, one can construct
boolean functions with lower $\mathrm{PAR}_{I,H,N}$ as follows.
\begin{thm}(\cite{parker2002})
    Let $p(x)=p(x_0,\dots,x_{n-1})$ be a boolean function defined as 
    \begin{equation}
        p(x)=\sum_{j=0}^{L-2}\theta_j(\mathbf{x}_j)\gamma_j(\mathbf{x}_j)
        +\sum_{j=0}^{L-1}g_j(\mathbf{x}_j)
    \end{equation}
    where $\theta_j,\gamma_j$ are any permutations $\mathbb{Z}_2^t\to\mathbb{Z}_2^t$
    and 
    \begin{equation}
        \mathbf{x}_j=\{x_{\pi(tj)},x_{\pi(tj+1)},\dots,x_{\pi(t(j+1)-1)} \},
    \end{equation}
    $\pi$ is a permutation of $\mathbb{Z}_n$.
    Then $s=2^{-\frac{1}{2}}(-1)^{p(x)}$ satisfies $\mathrm{PAR}(s)\leq 2^t$
    where $\mathrm{PAR}$ denotes the peak-to-average power ratio with respect to 
    linear unimodular unitary transform (LUUT)\footnote{
        Now we will not introduce the definition of LUUT, but this does not 
        affect later arguments.
    }.
\end{thm}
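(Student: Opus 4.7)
The plan is to bound $\mathrm{PAR}(s)$ by analysing the LUUT spectrum directly, exploiting the block-separable form of $p(x)$. Writing $n=Lt$ and grouping the coordinates of $x$ into the $L$ blocks $\mathbf{x}_0,\dots,\mathbf{x}_{L-1}$ (after applying the permutation $\pi$), any linear unimodular unitary $U$ on $\mathbb{C}^{2^n}$ acts by a product of local unimodular weights, so that the $k$-th component of $Us$ can be written as
\begin{equation*}
(Us)_k \;=\; 2^{-n/2}\sum_{x\in\mathbb{Z}_2^n}(-1)^{p(x)}\prod_{j=0}^{L-1}u_j(k_j,\mathbf{x}_j),\qquad |u_j(\cdot,\cdot)|=1.
\end{equation*}
This is the form I would work with throughout, with the aim of eventually showing $|(Us)_k|^2\leq 2^t$ for every admissible $U$ and every $k$.

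The main step is to peel off one block at a time, starting from $\mathbf{x}_{L-1}$. The only contributions of $\mathbf{x}_{L-1}$ to $p(x)$ come from $g_{L-1}(\mathbf{x}_{L-1})$ and from the coupling indexed by $j=L-2$, which, in the natural interpretation that mixes adjacent blocks, pairs $\theta_{L-2}(\mathbf{x}_{L-2})$ with $\gamma_{L-2}(\mathbf{x}_{L-1})$. Because $\gamma_{L-2}$ is a permutation of $\mathbb{Z}_2^t$, the inner sum over $\mathbf{x}_{L-1}$ is a unimodularly weighted character sum of length $2^t$, whose modulus I would bound uniformly in the remaining data $(\mathbf{x}_0,\dots,\mathbf{x}_{L-2},k)$ by $2^t$. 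Substituting this estimate leaves an expression that depends only on the first $L-1$ blocks and carries an accumulated factor $2^{t/2}$ after normalisation.

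Iterating the argument, I would next sum over $\mathbf{x}_{L-2}$, using the coupling $\theta_{L-3}\gamma_{L-3}$ together with $g_{L-2}$ and the residual phase inherited from the first peeling, apply the same permutation-plus-unimodular bound, and so on down to $\mathbf{x}_0$. After $L$ such reductions, collecting the normalisation $2^{-n/2}$ with the accumulated per-block factors $2^{t/2}$ gives $|(Us)_k|^2\leq 2^t$, which is the desired bound on $\mathrm{PAR}(s)$ since $U$ and $k$ were arbitrary.

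The hard part will be making the per-block reduction rigorous at each layer: one must verify that because each $\gamma_j$ is a bijection and the weights $u_j$ are unimodular, the inner block sums really behave like controlled $2^t$-length character sums and do not accumulate $O(2^{t/2})$ blow-ups that would degrade the final constant. Once this lemma is established, the induction on $L$ is routine and the constant $2^t$ comes out independent of $L$. A cleaner alternative, which I would use in a polished write-up, is to invoke Parseval blockwise and bundle all the unimodular factors into a single Cauchy-Schwarz estimate, as in the original Parker-Tellambura argument.
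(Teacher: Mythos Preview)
The paper does not prove this theorem; it is quoted verbatim from Parker--Tellambura and used as a black box, so there is no in-paper argument to compare yours against.

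On the substance of your sketch, the main peeling argument as you describe it does not close. Bounding the innermost block sum \emph{uniformly} by $2^t$ is just the triangle inequality, and once you have replaced that sum by its absolute value the remaining $(L-1)$-block expression has no cancellation left; iterating then gives only the trivial bound $|(Us)_k|^2\le 2^n$, not $2^t$. Your accounting (``accumulated factor $2^{t/2}$ after normalisation'' at each step) tacitly assumes square-root cancellation holds uniformly in $\mathbf{x}_{L-2}$, which is exactly what fails: after one Hadamard-type sum the resulting function of $\mathbf{x}_{L-2}$ has controlled $\ell^2$ norm but not $\ell^\infty$ norm, so you cannot feed it back into the same uniform estimate.

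The fix is precisely what you gesture at in your last sentence. Summing over $\mathbf{x}_{L-1}$ produces $2^{t/2}$ times a vector (in the variable $\theta_{L-2}(\mathbf{x}_{L-2})$) whose $\ell^2$ norm equals $2^{t/2}$, by unitarity of the Hadamard transform together with the bijectivity of $\gamma_{L-2}$. This $\ell^2$ control propagates through all intermediate peelings (each step is a unitary Hadamard composed with a permutation and multiplication by unimodular weights, all $\ell^2$-isometries), and a single Cauchy--Schwarz at the final block $\mathbf{x}_0$ then yields $|(Us)_k|\le 2^{t/2}$. Equivalently, this is the statement that $s$ lies in a complementary set of size $2^t$, which is how Parker--Tellambura phrase it. So your ``cleaner alternative'' is not an alternative but the actual argument; the uniform-bound-and-iterate version does not work.
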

This theorem can bound $\mathrm{PAR}_{I,H,N}$ from above because $\{I,H,N\}^n$
transformations are included in the class of LUUTs. Danielsen \cite{danielsen2005}
investigated whether this construction can give lower $\mathrm{PAR}_{I,H,N}$.
At the level of graphs, this construction includes a class of graphs 
known as nested clique graphs. Clique graphs are defined at the last of section 
\ref{sec4.1}, and a nested clique graph has a nested structure of clique graphs. 
For instance, if one nests 3-clique graph to 2-clique graph, the result is 
a graph shown in fig \ref{K2K3}. 
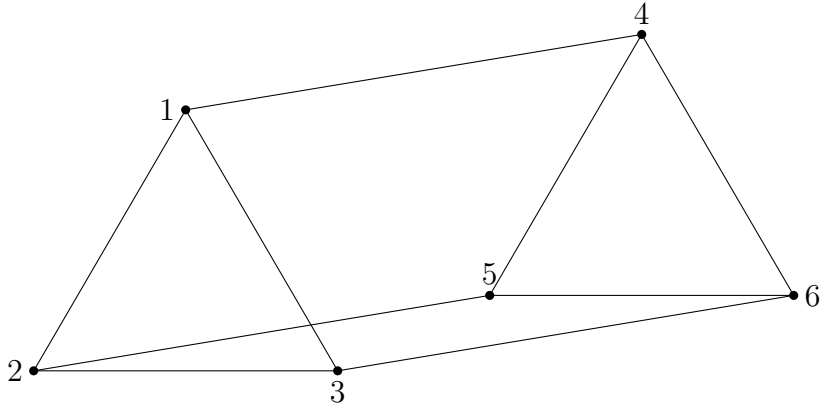
\begin{figure}[tbp]
    \centering
\begin{tikzpicture}
    \coordinate [label=left:1] (1) at (2,3.46); 
    \coordinate [label=left:2] (2) at (0,0); 
    \coordinate [label=below:3] (3) at (4,0); 
    \foreach \P in {1,2,3} \fill[black] (\P) circle (0.06);  
    \draw (1)--(2)--(3)--(1); 
    \coordinate [label=above:4] (5) at (8,4.46); 
    \coordinate [label=above:5] (6) at (6,1); 
    \coordinate [label=right:6] (7) at (10,1); 
    \foreach \Q in {5,6,7} \fill[black] (\Q) circle (0.06);
    \draw (5)--(6)--(7)--(5);
    \draw (1)--(5);
    \draw (2)--(6);
    \draw (3)--(7);
\end{tikzpicture}
\caption{An example of $[K_2[K_3]]$ graph.}\label{K2K3}
\end{figure}
Here the nested clique graph in this example is denoted $[K_2[K_3]]$
where $K_n$ means a clique graph with $n$ vertexes.
In the following, we will show several examples of $[K_t[K_t]]$ where $t\in\mathbb{Z}_{>0}$.
is a prime.
It consists of $t$ blocks of clique graph where every block is a $K_t$ graph and 
each block is denoted as $K_t^{(i)}$ for $i=1,2,\dots,t$.
So, for an arbitrary pair of two blocks ($K_t^{(i)}$,$K_t^{(j)}$),  
the vertexes of $K_t^{(i)}$,$K_t^{(j)}$ are connected in the following way.
If one regards the sets of all edges which connect $K_t^{(i)}$ and $K_t^{(j)}$
as a map from the vertex set of $K_t^{(i)}$ to that of $K_t^{(j)}$,
it is bijective.
Therefore choosing the bijection $\sigma^{i,j}:K_t^{(i)}\to K_t^{(j)}$ for all $i\neq j$,
one gets a nested clique graph. Here the bijection can be regarded as an element of 
the permutation group $S_n$.
Then relabeling vertexes, $\sigma^{i,i+1}$ for $i=1,\dots,t-1$ can be fixed to be 
the identity map. Therefore, when constructing a nested clique graph, 
one has to choose $\sigma^{i,j}$ for $j\geq i+2$ and the number of them is 
$\frac{1}{2}(t-1)(t-2)$.

We introduce a way of construction of a nested clique graph $[K_t[K_t]]$ for 
a prime number $t\geq 5$.
First, label all $\sigma^{i,j}$ for $j\geq i+2$ as $\sigma_1,\sigma_2,\dots,\sigma_{s}$
where $s=\frac{1}{2}(t-1)(t-2)$.
The order of labeling can be taken arbitrarily.
Then let $k,l,m$ be integers such that $1\leq k\leq s$, $k=lt+m$, and $1\leq m\leq t$.
Then $\sigma_k$ is set as 
\begin{equation}\label{perms}
    \sigma_k = \left(
        \begin{array}{llll}
            1&2&\cdots&t\\
            m&m+l&\cdots&m+(t-1)l
        \end{array}
    \right).
\end{equation}
For example, $\sigma_1$ is the identity permutation $(1\ 2\ \cdots\ t)$.
The condition that $t$ is prime is needed for $(m\ m+l\ \cdots\ m+(t-1))$ to be 
a permutation of $t$.
Then the adjacency matrix is constructed as 
\begin{equation}
    B=
    \begin{pmatrix}
        1_t-I & I & \sigma & \cdots & \sigma_2 & \sigma_1\\
        I & 1_t-I & I & \sigma &\cdots & \sigma_3\\
        \sigma^{\top} & I & 1_t-I & I &\cdots & \vdots\\
        \vdots & \vdots & I & \ddots & \ddots & \vdots\\
        \sigma_2^{\top} & \vdots & \vdots & \ddots & 1_t -I & I \\
        \sigma_1^{\top} & \sigma_3^{\top} & \cdots & \cdots & I & 1_t-I
    \end{pmatrix}.
\end{equation}
Here $1_t$ is $t\times t$ matrix with all entries 1 and $\sigma_k$ denotes 
$t\times t$ matrix with $(\sigma_k)_{i\sigma_k(i)}=1$ and 0 for otherwise.
For $t=3$, one has to choose one permutation $\sigma_1$ not to be the identity permutation,
and one gets $9\times 9$ adjacency matrix 
\begin{equation}
    B=
    \begin{pmatrix}
        0 & 1 & 1 & 1 & 0 & 0 & 0 & 1 & 0\\
        1 & 0 & 1 & 0 & 1 & 0 & 0 & 0 & 1\\
        1 & 1 & 0 & 0 & 0 & 1 & 1 & 0 & 0\\
        1 & 0 & 0 & 0 & 1 & 1 & 1 & 0 & 0\\
        0 & 1 & 0 & 1 & 0 & 1 & 0 & 1 & 0\\
        0 & 0 & 1 & 1 & 1 & 0 & 0 & 0 & 1\\
        0 & 0 & 1 & 1 & 0 & 0 & 0 & 1 & 1\\
        1 & 0 & 0 & 0 & 1 & 0 & 1 & 0 & 1\\
        0 & 1 & 0 & 0 & 0 & 1 & 1 & 1 & 0
    \end{pmatrix}
\end{equation}
for which the associated QECC generated by
$(I\mid B)$ has binary distance $d_b=4$. 

By this construction, we verified that for $t=3,5,7,11$ the associated QECC has 
at least $d_b=2t-2$. In fact, we constructed QECCs with $n=9,25,49,121$ and their 
binary distances are calculated to be $d_b=4,8,12,21$ respectively. 
For large $t$, this value is not so 
high because $d_b/n\to 0$ when 
$n(=t^2)\to\infty$. 
However, $d_b /n$ decreases as slowly as $1/\sqrt{n}$.
This means that these codes have binary distances as high as the square root of $n$.
Therefore, if this holds in general, we can construct relatively good codes 
compared to already known good codes associated with graphs.
Then we have the following conjecture, which is remained to be proved or disproved.
\begin{conjecture}
Let $t$ be an odd prime number.
$B$-form code $\mathcal{C}$ constructed by the nested clique graph with respect to 
permutations \eqref{perms} has length $t^2$ and binary distance at least $2t-2$.
\end{conjecture}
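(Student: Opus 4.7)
The plan is first to apply Theorem \ref{thm4.1} and Proposition \ref{eq4.3} to reduce the problem to a concrete weight computation. Since $f(x)=\sum_{i<j}B_{ij}x_i x_j$ is a homogeneous quadratic form and $B$ is symmetric with zero diagonal, a direct expansion yields $f(x+a)\equiv f(x)+(Ba)\cdot x+f(a)\pmod 2$. Substituting this into the sum of Proposition \ref{eq4.3} gives
\[
\sum_{x\in\mathbb{Z}_2^n}(-1)^{f(x)+f(x+a)+b\cdot x}=(-1)^{f(a)}\sum_{x\in\mathbb{Z}_2^n}(-1)^{(Ba+b)\cdot x},
\]
which vanishes unless $b\equiv Ba\pmod 2$, in which case it equals $\pm 2^n$. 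Using the identification $w_b(a,b)=w(a)+w(b)$ from \eqref{binaryweight} and the Gray map \eqref{Graymap}, it follows that the EPC distance of $f$ equals $\min_{a\neq 0}\bigl(w(a)+w(Ba)\bigr)$ with $n=t^2$, so the conjecture reduces to showing this minimum is at least $2t-2$.

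Next I would set up the combinatorial framework. Index the $t^2$ coordinates by pairs $(i,k)$ with $i,k\in\{1,\dots,t\}$ and view $a$ as a matrix $A\in\mathbb{F}_2^{t\times t}$ with $s_i$ and $S_i$ denoting the weight and support of its $i$-th row. Using the block structure of $B$ one obtains
\[
(Ba)_{(i,k)}=s_i+A_{i,k}+\sum_{j\neq i}A_{j,\sigma^{i,j}(k)}\pmod 2,
\]
where each $\sigma^{i,j}$ is one of the affine permutations $x\mapsto m+(x-1)l\pmod t$ from \eqref{perms}. The within-block part of $Ba$ in block $i$ has support $S_i$ if $s_i$ is even and the complement of $S_i$ if $s_i$ is odd, while the between-block contributions to block $j$ are symmetric differences of the images $\sigma^{i,j}(S_i)$ over $i\in\mathcal{B}\setminus\{j\}$, where $\mathcal{B}=\{i:s_i>0\}$ is the set of active blocks.

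The proof would then proceed by case analysis on $|\mathcal{B}|$. For $|\mathcal{B}|=1$ an elementary count (as in the single-block analysis sketched above) yields $w(a)+w(Ba)\geq 2t-1$. For $|\mathcal{B}|=2$ the best strategy aligns $S_2$ with $\sigma^{1,2}(S_1)$ to kill the within-block contributions in the two active blocks, but then each of the $t-2$ silent blocks $j$ inherits support $\sigma^{1,j}(S_1)\triangle(\sigma^{2,j}\sigma^{1,2})(S_1)$; a pigeonhole argument, together with the observation that the permutations in \eqref{perms} are generically non-multiplicatively-compatible, forces the total weight over these silent blocks to be large. For $|\mathcal{B}|\geq 3$ one iterates this idea, exploiting the affine-line structure of $\{\sigma_k\}$ and the primality of $t$ (which guarantees $\sigma_k$ is a genuine permutation whenever $l\not\equiv 0\pmod t$).

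The main obstacle is the case $|\mathcal{B}|\geq 3$, where several sets of between-block contributions can in principle conspire to cancel most of $Ba$. The crux will be a combinatorial incompatibility lemma stating that, for the specific labels chosen in \eqref{perms}, no collection of supports $\{S_i\}_{i\in\mathcal{B}}$ with $\sum_i s_i$ small can simultaneously annihilate $Ba$ on all $t-|\mathcal{B}|$ silent blocks and on all $|\mathcal{B}|$ active blocks. Establishing such a lemma appears to require a careful analysis of how the affine group $\mathrm{AGL}(1,\mathbb{F}_t)$ acts via the labels in \eqref{perms}—perhaps by reinterpreting the matchings as pencils of lines in the affine plane over $\mathbb{F}_t$ and invoking an incidence bound—and this is the step most likely to demand genuinely new ideas.
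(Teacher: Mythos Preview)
The paper does not prove this statement: it is explicitly labeled a \emph{conjecture} and the surrounding text says it ``is remained to be proved or disproved.'' The only evidence the paper offers is a direct computation of $d_b$ for the four primes $t=3,5,7,11$, yielding $d_b=4,8,12,21$ respectively. There is therefore no proof in the paper for your proposal to be compared against.

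That said, your reduction is correct and already goes further analytically than the paper does. For the quadratic $f(x)=\sum_{i<j}B_{ij}x_ix_j$ with $B$ symmetric and zero on the diagonal, the identity $f(x+a)\equiv f(x)+(Ba)\cdot x+f(a)\pmod 2$ indeed collapses the sum in Proposition~\ref{eq4.3} to a character sum, so the nonvanishing condition becomes $b=Ba$ and the EPC distance is exactly $\min_{a\neq 0}\bigl(w(a)+w(Ba)\bigr)$. Your single-block computation ($|\mathcal{B}|=1$) is also correct and gives $w(a)+w(Ba)\geq 2t-1$.

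Where your proposal is genuinely incomplete is precisely where you flag it: the case $|\mathcal{B}|\geq 3$. The sketch of an ``incompatibility lemma'' via the action of $\mathrm{AGL}(1,\mathbb{F}_t)$ is plausible in spirit, but note one structural obstacle you have not addressed. The paper allows the labeling of the $\sigma^{i,j}$ (for $j\geq i+2$) by $\sigma_1,\dots,\sigma_s$ to be chosen \emph{arbitrarily}, so the assignment of affine slopes to block pairs carries no canonical algebraic structure; in particular, the family $\{\sigma^{i,j}\}$ need not form a pencil of lines or any coherent $\mathrm{AGL}(1,\mathbb{F}_t)$-orbit pattern. Any incidence or cancellation argument will have to be robust under every such labeling (or else the conjecture must be restricted to a specific one), and this is a real gap between your outline and an actual proof.
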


\subsection{$\mathrm{PAR}_{I,H}$ and independence number}
As we have seen, the binary distance of a code is related to $\mathrm{PAR}_{I,H}$ of the
associated boolean function. So we are motivated to find quadratic boolean
functions with lower $\mathrm{PAR}_{I,H}$. On the other hand, Riera and Parker \cite{riera2006b}
showed that $\mathrm{PAR}_{I,H}$ is related to the size of the
maximum independent set (i.e. independence number), which is explained in the following. 
In this subsection, we examine the relationship between the binary distance and the
independence number of the graph by computing these values for 
some examples.

First, we explain the definition of the maximum independent set of a graph and 
the proposition that relates $\mathrm{PAR}_{I,H}$ and independence number. 
\begin{dfn}
    Let $G=(V,E)$ is a graph of size $n$. Then a subset $W$ of $V$ is called an 
    independent set of $G$ if and only if no two vertexes of $W$ are connected.
    If $|W|$ denotes 
    the size of $W$, the maximum value of $|W|$ for independent sets of $G$ is 
    called independence number $\alpha(G)$ of $G$. The independent set with this maximum value 
    is the maximum independent set. 
\end{dfn}
For instance, for the graph in Fig.\ref{K2K3}, the vertex set $\{1,5\}$ is an independent
set of $[K_2[K_3]]$ of size two and one finds this is the maximum independent set. 
Therefore $[K_2[K_3]]$ has independence number two. Note that the maximum independent set 
is not unique. The vertex set $\{2,4\}$ in the above example has size two,
so this is also the maximum independent set. As in the following proposition, 
Riera and Parker showed that the independence number deals with $\mathrm{PAR}_{I,H}$.
\begin{prop}(\cite{riera2006b})
    Let $f(x)=\sum_{i<j}B_{ij}x_i x_j$ be a boolean function associated with a 
    adjacency matrix $B=(B_{ij})$ of a graph $G$. Then $\mathrm{PAR}_{I,H}$ of 
    $(-1)^{f(x)}$ and $\alpha(G)$ is related as 
    \begin{equation}
        \mathrm{PAR}_{I,H}=2^{\alpha(G)}.
    \end{equation}
\end{prop}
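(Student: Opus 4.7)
The plan is to expand $|(Us)_k|^2$ for each $U\in\{I,H\}^n$ in terms of the induced subgraph on the Hadamard support, and then identify the optimal choice of $U$ with a maximum independent set of $G$. I would parametrize $U\in\{I,H\}^n$ by the subset $R\subseteq V$ carrying the Hadamard factors, write $S=V\setminus R$ for its complement, and split each $x=(y_R,x_S)$. Substituting into $f$, the quadratic structure gives
\[
f(y_R,x_S)=g_R(y_R)+c_R(x_S)\cdot y_R+d(x_S),
\]
where $g_R(y)=\sum_{i<j,\,i,j\in R}B_{ij}y_iy_j$ is the boolean function of the induced subgraph $G[R]$ with adjacency $B_R$, the vector $c_R(x_S)\in\mathbb{F}_2^R$ encodes cross-edges from $R$ into $S$, and $d(x_S)$ is independent of $y_R$. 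Applying the partial Walsh--Hadamard transform then yields
\[
(Us)_{(u,x_S)}=2^{-|R|/2}(-1)^{d(x_S)}\widehat{g_R}\bigl(u+c_R(x_S)\bigr),
\]
so that $\max_{u,x_S}|(Us)|^2=2^{-|R|}\max_w|\widehat{g_R}(w)|^2$.

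Next I would evaluate the peak of the Walsh--Hadamard spectrum of the quadratic form $g_R$ by applying Wiener--Khintchin (Theorem~\ref{WK}). Using the quadratic identity $g_R(y)+g_R(y+a)=g_R(a)+y^{\top}B_R a$ over $\mathbb{F}_2$, the autocorrelation $r(a)=\sum_y(-1)^{g_R(y)+g_R(y+a)}$ evaluates to $(-1)^{g_R(a)}\cdot 2^{|R|}$ when $a\in\ker B_R$ and vanishes otherwise. Since $B_R$ is zero on $\ker B_R$, the restriction $g_R|_{\ker B_R}$ is additive and hence an $\mathbb{F}_2$-linear functional. Inverting Wiener--Khintchin then gives
\[
|\widehat{g_R}(w)|^2=2^{|R|}\sum_{a\in\ker B_R}(-1)^{g_R(a)+w\cdot a},
\]
whose inner sum equals $|\ker B_R|$ on a single coset of $(\ker B_R)^{\perp}$ and vanishes elsewhere, so the peak value is $2^{|R|+\dim_{\mathbb{F}_2}\ker B_R}$. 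Combining with the previous step,
\[
\mathrm{PAR}_{I,H}(s)=2^{\max_{R\subseteq V}\dim_{\mathbb{F}_2}\ker B_R}.
\]

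The concluding step is to identify $\max_{R\subseteq V}\dim_{\mathbb{F}_2}\ker B_R$ with $\alpha(G)$. The lower bound is immediate: taking $R$ to be a maximum independent set makes $B_R=0$, so $\dim\ker B_R=|R|=\alpha(G)$. The upper bound, $\dim_{\mathbb{F}_2}\ker B_R\le\alpha(G)$ for every $R\subseteq V$, is the main obstacle and is the genuine combinatorial content of Riera--Parker's theorem. The strategy here would be to extract from any basis $\{a^{(1)},\ldots,a^{(d)}\}$ of $\ker B_R$ an independent set of $G$ of size $d$, for instance by performing Gaussian elimination over $\mathbb{F}_2$ to obtain a basis whose supports pin down a family of mutually non-adjacent vertices, exploiting the constraint that each vertex of $R$ meets every basis element in an even number of positions. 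This algebraic-to-combinatorial passage --- turning nullity information about $B_R$ into an explicit independent set of the ambient graph --- is the delicate step of the proof and the one requiring the most care.
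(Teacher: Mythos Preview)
The paper does not supply a proof of this proposition; it is quoted from \cite{riera2006b} and used as a black box, so there is no in-paper argument to compare yours against.

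Your reduction is sound and is the standard route: parametrising $U\in\{I,H\}^n$ by its Hadamard support $R$, splitting $f$ as $g_R(y_R)+c_R(x_S)\cdot y_R+d(x_S)$, and applying Wiener--Khintchin to the induced quadratic form $g_R$ correctly gives
\[
\mathrm{PAR}_{I,H}(s)\;=\;2^{\max_{R\subseteq V}\dim_{\mathbb{F}_2}\ker B_R},
\]
and the lower bound $\alpha(G)\le\max_R\dim\ker B_R$, obtained by taking $R$ to be a maximum independent set, is immediate.

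The genuine gap is the last step, and it is not merely ``delicate'': the inequality $\dim_{\mathbb{F}_2}\ker B_R\le\alpha(G)$ that you propose to prove is \emph{false} in general, so no Gaussian-elimination argument can extract an independent set of the required size. Take $G=K_{3,3}$ with bipartition $\{1,2,3\}\cup\{4,5,6\}$ and $R=V$. Over $\mathbb{F}_2$ the adjacency matrix has rank $2$, hence $\dim\ker B=4$, while $\alpha(K_{3,3})=3$. One checks directly that
\[
(H^{\otimes 6}s)_0=2^{-3}\sum_{x\in\mathbb{Z}_2^6}(-1)^{(x_1+x_2+x_3)(x_4+x_5+x_6)}=4,
\]
so $|(H^{\otimes 6}s)_0|^2=16>8=2^{\alpha(G)}$. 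Thus the identity $\mathrm{PAR}_{I,H}=2^{\alpha(G)}$, exactly as the paper records it, cannot hold for all graphs; the result in \cite{riera2006b} presumably carries an additional hypothesis or is phrased in terms of the nullity invariant $\max_R\dim_{\mathbb{F}_2}\ker B_R$ rather than $\alpha(G)$ (the two agree on many small graphs, which may explain the conflation). Your argument is correct right up to the point where the statement itself breaks down.
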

With this proposition, one can evaluate binary distance by measuring $\alpha(G)$
instead of $\mathrm{PAR}_{I,H}$. Now we will observe the tendency 
\begin{equation*}
    \mathrm{low\ }\alpha(G)\leftrightarrow \mathrm{high\ EPC\ distance}.
\end{equation*}
through several examples of graphs, 
comparing the binary distance (or Hamming distance) and independence number.
\subsubsection*{Example 1 : Nested clique graphs}
The first example is nested clique graphs $[K_t[K_t]]$ which we have constructed in the
previous section. For these graphs, the independence number can be easily computed 
to be $t$ as follows.
The independence number $\alpha([K_t[K_t]])$ is obviously bounded from above $\alpha([K_t[K_t]])\leq t$ 
because for each complete graph $K_t^{(i)}$ no two vertexes can be disconnected.
Then choosing one vertex from each $K_t^{(i)}$ adequately, one obtains an independent set of 
size $t$, which saturates the bound. Since we have already established the conjecture
about the binary distances of nested clique graphs, we should check whether the independence 
number $t$ is low. To check this, we compare the independence number $\alpha([K_t[K_t]])=t$ to
those of random graphs.

Let $\alpha(G(n,m))$ be an independence number of a random sparse graph $G$ with $m$ edges and of size $n$.
Then it is well-known that a non-constructive argument gives 
\begin{equation}\label{eq4.14}
    \alpha(G(n,m))\sim \frac{2\log d_V}{d_V}n
\end{equation}
where $d_V$ is the vertex degree of $G$. This means $\alpha(G(n,m))$ is nearly equal to 
$\frac{2\log d_V}{d_V}n$ with high possibility.
Note that this approximation can be applied only for sparse graphs, which is the case 
for nested clique graphs.
In this case, one has $n=t^2$ and $d_V=2t-2$. Assigning these values to the right hand side of 
eq.\eqref{eq4.14}, we see 
\begin{equation}
    \alpha(G(n,m))\sim \frac{t^2\log (2t-2)}{t-1}
\end{equation}
with high possibility. This is apparently higher than $\alpha([K_t[K_t]])=t$ and 
the ratio of those independence numbers is 
\begin{equation}
    \frac{\alpha([K_t[K_t]])}{\alpha(G(n,m))}\sim o (\frac{1}{\log t})
\end{equation}
for large $t$. So we take these graphs as examples that have 
relatively high binary distances and low independence numbers.
\subsubsection*{Example 2 : Circulant graphs}
Next we consider circulant graphs,
which were investigated by Grassl and Harada \cite{GRASSL2017399}.
A circulant graph is a graph whose adjacency matrix is circulant, and
a circulant matrix is defined to be of the following form:
\begin{equation}\label{eq4.17}
    \begin{pmatrix}
        r_0 & r_1 & \cdots & r_{n-2} & r_{n-1}\\
        r_{n-1} & r_0 & \cdots & r_{n-3} & r_{n-2}\\
        \vdots & \vdots & \ddots & \vdots & \vdots\\
        r_2 & r_3 & \cdots & r_0 & r_1\\
        r_1 & r_2 & \cdots & r_{n-1} & r_0
    \end{pmatrix}
\end{equation} 
where $r_0=0$ and each $r_i$ for $i=1,\dots,n-1$ is zero or one.
We list in the table \ref{tablegrassl} the codes they discovered and 
the independence numbers of their corresponding graphs.  
Here $n$ and $d_H$ are respectively
the length and Hamming distance of the code. We also list the independence number 
$\alpha(G)$ with adjacency matrix of form \eqref{eq4.17}.
In order to verify that these codes have low $\alpha(G)$, we compare these independence
numbers to those of many random graphs as in the table.

\begin{table}[h]
    \centering
    \begin{tabular}{c|ccc}
        \hline
        name & ($n,d_H$) &  $\alpha(G)$ & ($\alpha(G)$, $\#$ of random graphs)\\
        \hline\hline
        $C_{56}$ & (56,15)  & 7 &(7,180), (8,7773), (9,2011), (10,36)\\
        $C_{57}$ & (57,15)  & 11 &(10,211), (11,6129), (12,3445), (13,208), (14,7)\\
        $C_{58}$ & (58,16)  & 8 &(8,1754), (9,7453), (10,786), (11,7)\\
        $C_{63}$ & (63,16)  & 8 &
        
            (10,4), (11,3017), (12,6188), (13,769), (14,21), (15,1)
        \\
        $C_{67}$ & (67,17)  & 6 &(7,4), (8,5792), (9,4103), (10,100), (11,1)\\
        $C_{70}$ & (70,18)  & 4 &(5,13), (6,8418), (7,1560), (8,9)\\
        $C_{71}$ & (71,18)  & 9 &(9,3530), (10,6133), (11,330), (12,7)\\
        $C_{79}$ & (79,19)  & 9 &(7,436), (8,8816), (9,744), (10,4)\\
        $C_{83}$ & (83,20)  & 5 &(7,2233), (8,7495), (9,271), (10,1)\\
        $C_{87}$ & (87,20)  & 8 &(8,246), (9,8502), (10,1235), (11,17)\\
        $C_{89}$ & (89,21)  & 8 &(9,1976), (10,7561), (11,458), (12,4), (13,1)\\
        $C_{95}$ & (95,20)  & 7 &(8,5207), (9,4704), (10,89)\\
        \hline
    \end{tabular}
    \caption{Codes that Grassl and Harada \cite{GRASSL2017399} constructed 
    compared to random graphs.}\label{tablegrassl}
\end{table}
For each code listed in table \ref{tablegrassl}, we enumerate 10,000 random 
graphs with the same length and vertex degree as the corresponding code.
Then computing the independence numbers of all these graphs, we list 
in the right column pairs of an independence number and the number of 
random graphs with the independence number.
The result implies that many of the codes that they constructed have low 
independence number compared to random graphs. In particular, 
for $C_{63},\ C_{67},\ C_{70},\ C_{83},\ C_{89},\ C_{95}$, their independence numbers 
are lower than the lowest number among corresponding 10,000 random graphs. 

\subsubsection*{Example 3 : Circulant matrices}
In addition to the result above, Harada \cite{harada2018} investigated 
several graphs whose adjacency matrices are given by pairs of circulant matrices. 
He constructed three adjacency matrices combining two circulant matrices $A$ and $B$
as 
\begin{equation}\label{eq4.18}
    \begin{pmatrix}
        A & B \\ B^{\top} & A
    \end{pmatrix},
\end{equation}
and found corresponding codes have higher distances than the previously known upper bound
as in table \ref{tableharada}.

\begin{table}[h]
    \centering
    \begin{tabular}{c|ccc}
        \hline
        name & ($n,d_H$) &  $\alpha(G)$ & ($\alpha(G)$, $\#$ of random graphs)\\
        \hline\hline
        $C_{66}$ & (66,17)  & 5 & (7,25), (8,7254), (9,2661), (10,60)\\
        $C_{78}$ & (78,19)  & 7 & (7,21), (8,7670), (9,2174), (10,35)\\
        $C_{94}$ & (94,21)  & 8 & (9,466), (10,8422), (11,1106), (12,6)\\
        \hline
    \end{tabular}
    \caption{Three codes that Harada \cite{harada2018} constructed compared to 
    random graphs. }\label{tableharada}
\end{table}
Similarly to Example 2, one finds that these codes have lower independence numbers than 
10,000 random graphs.
This example together with Example 2 suggests that
low independence number may be regarded as an index for high distance.

\subsubsection*{Example 4 : Complete graphs}
Since the correspondence between high distance and low $\mathrm{PAR}$ is not rigorous,
there are some counterexamples with low distances and low independence numbers.
One of them is complete graphs, which is denoted as $K_t$ in this paper.
As in the last of section \ref{sec4.1}, the binary distance and Hamming distance 
are known to be four and two, which is low because these values are constant for any length $n$.
On the other hand, the independence number of a complete graph is one,
as all pairs of vertexes are connected. This independence number is quite lower 
than random graphs listed in table \ref{tablegrassl}, \ref{tableharada}.

\section{Conclusion}
In this paper, we discussed how the binary distance of a quantum code
is related to its boolean function. 
On the other hand, the binary distance of QECC is proportional to the spectral gap of 
the Narain CFT in a favorable condition and is expected to be related to 
the spectral gap in general.
Therefore, our result can be used when seeking Narain CFTs with large spectral gaps,
replacing the problem by that of getting higher EPC distances.
However in general, making EPC distance higher is very difficult, so we planned to obtain lower PAR instead of 
higher EPC distances.
According to previous works by Danielsen \cite{danielsen2005}, 
high APC distance seems to have a high chance of implying low PAR and vise versa.
Extending this relation to EPC distance, 
we formulated the upper bound of $\mathrm{PAR}_{I,H}$ with its bound depending on
the EPC distance. 

Moreover, consulting with the work of Parker and Tellambura \cite{parker2002}, we constructed some 
nested clique graphs of length $t^2$ and confirmed that their binary distances satisfy 
$d_b\geq 2t-2$ for $t=3,5,7,11$. To prove or disprove that this holds for all $t$, and to 
make the upper bound \eqref{upperbound} stricter are left to future works.
We also tested the correlation between the high distance and the low independence number
comparing the independence number of graphs with high distance to those of 
a large number of random graphs.
As a result, we found that many graphs follow the expectation, though some examples do not.
Therefore, we propose the low independence number as a possible index for high distance. 

Let us conclude this paper with some comments on the future direction.
As we construct some quantum codes with relatively high binary distances, 
it should be tested how large spectral gaps the Narain CFTs associated with these codes 
have. In addition, it is a remaining problem to construct the holographic duals to the code 
theories. This paper will shed light on this problem by proposing the construction of 
a quantum code with a high chance of a large binary distance.
The physical interpretation of the construction of Narain CFTs with quantum codes
is also to be given. Recently this has been studied by \cite{buican2021a,dymarsky2021b}.
Finally, it is a future subject to extend the correspondence between Narain CFTs and 
quantum stabilizer codes. In the construction of \cite{dymarsky2020a}, 
only GF(4) codes regarded as codes over $\mathbb{F}_2$ are used.
Some extensions are given in \cite{dymarsky2021c,yahagi2022} where the coefficient field 
is extended to some finite fields $\mathbb{F}_p$. Using some constructions 
of good codes over $\mathbb{F}_p$ for $p>2$, one may address the problem to 
obtain a larger spectral gap of Narain CFT.

\section*{Acknowledgements}
We thank Toshiya Kawai and Anatoly Dymarsky for discussions and comments on our paper.
\appendix
\section{The relation between APC distance and code's distance}\label{appendixB}
In this appendix, we explain how to identify APC distance with distance of QECC,
following \cite{danielsen2005,danielsen2006}.

As mentioned in sec. \ref{sec3.2}, operations on stabilized state 
\begin{equation}
    \mathcal{E}:\ket{\psi}\mapsto\mathcal{E}\ket{\psi},\ 
    \ket{\psi}=\sum_{\alpha\in\mathbb{Z}_2^n}(-1)^{f(\alpha)}\ket{\alpha_1,\dots,\alpha_n}
\end{equation}
are identified with transformations on the associated boolean function $f(x)$.
In particular, error operators on $\ket{\psi}$ is generated by Pauli operators
$\{I,\sigma_x ,\sigma_y ,\sigma_z\}$ and each action on the boolean function can
explicitly be written in terms of certain vectors $a,b$. 
First we consider $\sigma_x$ operator on $\ket{\psi}$. Let $\mathcal{E}$ be
an operator associated with a vector $a=(a_i)\in\mathbb{Z}_2^n$ such that it makes 
$\sigma_x$ act on the $i$-th qubit if and only if $a_i=1$ and $I$ for otherwise.
\begin{equation}
    \mathcal{E}=\bigotimes_{a_i=1}\sigma_x^{(i)}\bigotimes_{a_j=0}I^{(j)}.
\end{equation}
Then the action on $f(x)$ is easily verified to be 
\begin{equation}
    f(x)\mapsto f(x+a).
\end{equation}
This is because $\sigma_x$ is bit-flip operator $\sigma_x \ket{0}=\ket{1}$, so 
performing $\sigma_x$ on the $i$-th qubit means that the $i$-th argument of $f(x)$ is
flipped as $0\leftrightarrow 1$. Since in $\mathbb{Z}_2^n$, the flip $0\leftrightarrow 1$
is equivalent to the addition $+1$, $f(x)$ changes to $f(x+a)$.
Then consider $\sigma_z$ operators associated with a vector $b=(b_i)\in\mathbb{Z}_2^n$.
\begin{equation}
    \mathcal{E}=\bigotimes_{b_i=1}\sigma_z^{(i)}\bigotimes_{b_j=0}I^{(j)}.
\end{equation}
The operator $\sigma_z$ has eigenvectors $\ket{0},\ket{1}$ whose eigenvalues are 
1 and -1 respectively. So performing $\sigma_z$ on the $i$-th qubit state $\ket{\alpha_i}\ (\alpha=0,1)$
is equivalent to adding $\alpha_i$ to $f(x)$.
Therefore the action of $\mathcal{E}$ written in terms of $f(x)$ is of form 
\begin{equation}
    f(x)\mapsto f(x)+b\cdot x.
\end{equation}
Combining these two observations, one obtains the action of $\sigma_y=i\sigma_x\sigma_z$.
Thus, the error operator $\mathcal{E}$ is 
\begin{equation}\label{error}
    \mathcal{E}=\bigotimes_{a_i=1,b_i=0}\sigma_x^{(i)}\bigotimes_{a_j=b_j=1}\sigma_y^{(j)}
    \bigotimes_{a_k=0,b_k=1}\sigma_z^{(k)}\bigotimes_{a_l=b_l=0}I^{(l)},
\end{equation}
and $f(x)$ changes under the action of $\mathcal{E}$ as 
\begin{equation}
    f(x)\mapsto f(x+a)+b\cdot x.
\end{equation}

The distance of QECC is defined as the minimum weight of error operators $\mathcal{E}$
\eqref{error} such that QECC can detect the error where the weight is the number of
indices on which $\sigma_{x,y,z}$ is performed. Note that the weight is equal to
the weight of $(a,b)$ regarded as a codeword of GF(4) code. Then in order for
QECC to detect an error defined by $(a,b)$, it is necessary and sufficient that 
the original state $\ket{\psi}$ and the error state $\mathcal{E}\ket{\psi}$ are orthogonal.
Since $\{\ket{\alpha_1\dots\alpha_n}\}$ is an orthogonal basis, the orthogonality
of $\ket{\psi}$ and $\mathcal{E}\ket{\psi}$ is equivalent to the condition
\begin{equation}
    s\cdot s'=0\quad s=(-1)^{f(x)},\ s'=(-1)^{f(x+a)+b\cdot x}.
\end{equation}  
Therefore, the distance of QECC is identical to the minimum value of the
weight of the vector $(a,b)$
such that 
\begin{equation}
    \sum_{x\in\mathbb{Z}_2^n}(-1)^{f(x)+f(x+a)+b\cdot x}\neq 0.
\end{equation}
Then this is the same form as the definition of APC distance reformulated in \cite{danielsen2006},
which is equivalent to Definition \ref{defofAPC}.
So one can identify APC distance of a boolean function with the distance of the corresponding 
QECC.

\newpage
\bibliographystyle{jalpha}
\bibliography{Furuta202203}

\end{document}